\numberwithin{equation}{section}
\newtheorem*{proposition*}{Proposition}
\newtheorem*{theorem*}{Theorem}
\newtheorem*{conjecture*}{Conjecture}
\newtheorem*{claim*}{Claim}
\newtheorem*{lemma*}{Lemma}
\newtheorem*{corollary*}{Corollary}
\newtheorem{theorem}{Theorem}[section]
\newtheorem{proposition}[theorem]{Proposition}
\newtheorem{lemma}[theorem]{Lemma}
\newtheorem*{definition*}{Definition}
\newtheorem*{assumption*}{\mathcal{A}ssumption}
\newtheorem*{remark*}{Remark}
\newcommand{\s}{\mathbb{S}}
\begin{document}
\title{Asymptotics for scalar perturbations from a neighborhood of the bifurcation sphere}
\author{Y. Angelopoulos, S. Aretakis, and D. Gajic}
\date{June 8, 2018}

\maketitle
\begin{abstract}

In our previous work {\small [{\sc Angelopoulos, Y., Aretakis, S. and Gajic, D.} Late-time asymptotics for the wave equation on spherically symmetric stationary backgrounds. {\em Advances in Mathematics 323\/} (2018), 529--621.]} we showed that the coefficient in the precise leading-order late-time asymptotics for solutions to the wave equation with smooth, compactly supported initial data on Schwarzschild backgrounds is proportional to the time-inverted Newman--Penrose constant (TINP), that is the Newman--Penrose constant of the associated time integral. The time integral (and hence the TINP constant) is canonically defined in the domain of dependence of any Cauchy hypersurface along which the stationary Killing field is non-vanishing.  As a result, an explicit expression of the late-time polynomial tails was obtained in terms of initial data on Cauchy hypersurfaces intersecting the future event horizon to the future of the bifurcation sphere.

In this paper, we extend the above result to Cauchy hypersurfaces intersecting the bifurcation sphere via a novel geometric interpretation of the TINP constant in terms of a modified gradient flux on Cauchy hypersurfaces. We show, without appealing to the time integral construction, that a general conservation law holds for these gradient fluxes. This allows us to express the TINP constant in terms of initial data on Cauchy hypersurfaces for which the time integral construction breaks down.

\end{abstract}

\tableofcontents

\section {Introduction}
\label{introduction}

\subsection{Introduction and background}
\label{intro}

Late-time asymptotics for solutions to the wave equation
\begin{equation}
\Box_{g}\psi=0
\label{we}
\end{equation}
on globally hyperbolic Lorentzian manifolds $(\mathcal{M},g)$ have important applications in the study of problems that arise in general relativity such as 1) the black hole stability problem, 2) the dynamics of black hole interiors and strong cosmic censorship and 3) the propagation of gravitational waves. 

The existence of late-time \textit{polynomial} tails for solutions to the wave equation with smooth, \textit{compactly supported} initial data on \textit{curved} spacetimes\footnote{Recall that, in view of Huygens' principle, solutions with compactly supported initial data on the flat Minkowski spacetime $\mathbb{R}^{3+1}$ identically vanish inside a ball of arbitrary large radius after a sufficiently large time. } was first heuristically obtained by Price \cite{Price1972} in 1972. Specifically, the work in \cite{Price1972} suggests that on Schwarzschild spacetimes $(\mathcal{M}_M,g_M)$, with $M>0$, such solutions $\psi$ have the following asymptotic behavior in time as $\tau\rightarrow \infty$:
\begin{equation}
\psi|_{r=r_0}(\tau,r=r_0,\theta,\varphi)\sim \frac{1}{\tau^3} \text{ along the } r=r_0>2M \text{  hypersurfaces} 
\label{eq:1}
\end{equation}
away from the event horizon $\mathcal{H}^{+}=\{r=2M\}$. Subsequent work by Leaver\footnote{The work of Leaver moreover relates the existence of power law tails in the late-time asymptotics to the presence of a branch cut in the Laplace transformed Green's function corresponding to the equations satisfied by fixed spherical harmonic modes.} \cite{leaver} and Gundlach, Price and Pullin \cite{CGRPJP94b} suggested the following additional asymptotics:
\begin{equation}
\psi|_{\mathcal{H}^{+}}(\tau,r=2M,\theta,\varphi) \sim \frac{1}{\tau^3} \text{ along the event horizon }\mathcal{H}^{+},
\label{eq:2}
\end{equation}
and
\begin{equation}
r\psi|_{\mathcal{I}^{+}}(\tau,r=\infty,\theta,\varphi) \sim \frac{1}{\tau^2} \text{ along the null infinity }\mathcal{I}^{+},
\label{eq:3}
\end{equation}
which would imply that the late-time tails are ``radiative''.
\begin{figure}[H]
\begin{center}
\includegraphics[width=6cm]{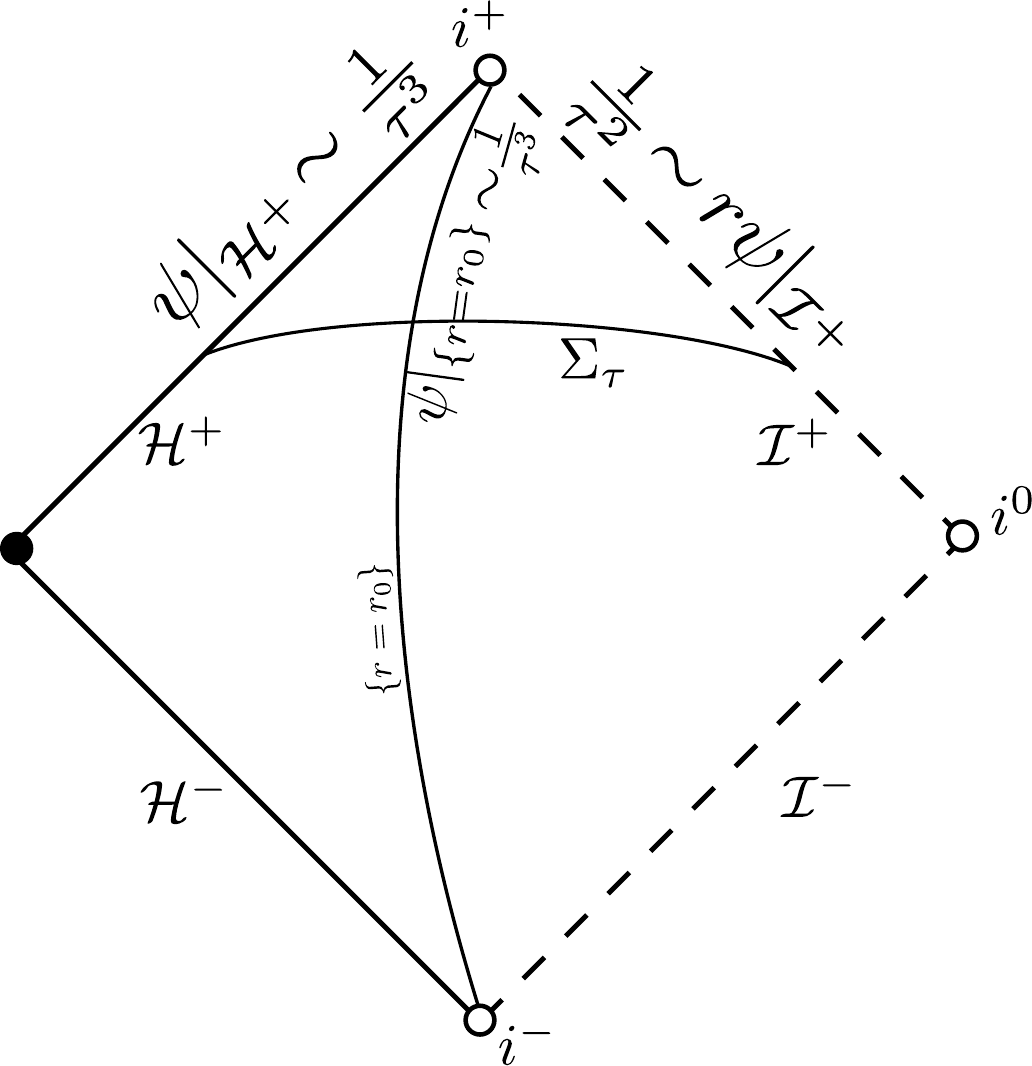}
\caption{\label{fig:1}Price's asymptotics.}
\end{center}
\end{figure}
\vspace{-0.9cm}
Here $\tau$ is an appropriate ``time'' parameter that is comparable to the $t$ coordinate away from the event horizon and null infinity, the level sets $\Sigma_\tau$ of which are spacelike hypersurfaces that cross the event horizon and terminate at future null infinity. There has been a large number of other works in the physics literature elucidating novel aspects of the behavior of tails of scalar (and electromagnetic and gravitational) fields from a heuristic or numerical point of view; see for example \cite{cunmonpr, Gomez1994, clsy95, karzdma, Bicak1972, hm2012, ori2013, sela} for tails on spherically symmetric spacetime backgrounds. For works on tails in Kerr spacetimes, see \cite{LBAO99, pricekerr, burkokerr, zimmerman1} and the references therein.

Note that in view of their asymptotic character, \eqref{eq:1}, \eqref{eq:2} and \eqref{eq:3} do \underline{not} provide estimates for the size of solutions for \textbf{all} times. Furthermore, they are restricted to fixed spherical harmonic modes and do not address the decay behavior when summing over all the modes. These issues have been extensively addressed by rigorous mathematical works in the past, which derived sharp global quantitative bounds for solutions to the wave equation on general black hole spacetimes. See, for instance, \cite{ part3,dhr-teukolsky-kerr, lecturesMD, MDIR05, Dafermos2016,    moschidis1,  volker1, tataru3, metal, blukerr, dssprice, kro} and references therein in the asymptotically flat setting and see \cite{peter2, semyon1, other1, gusmu1, gusmu2} in the asymptotically de Sitter and anti de Sitter setting. These works have introduced numerous new insights and techniques in the study of the long time behavior of solutions to the wave equation on curved backgrounds. For example, the role of the \textit{redshift effect} as a stabilizing mechanism was first understood in \cite{redshift} and difficulties pertaining to null infinity were addressed in \cite{newmethod,moschidis1}. We also refer to \cite{baskinw, baskinwang, baskin16} for results regarding the existence of an asymptotic expansion in time for solutions to the wave equation on certain asymptotically flat spacetimes. Recent very important developments in the context of stability problems include \cite{HV2016,Dafermos2016,klainerman17,moschidisads}. 

 A special case that has recently attracted strong interest is that of extremal black holes for which the asymptotic terms are very different in view of instabilities of derivatives of the scalar fields (see for instance \cite{aretakis1,aretakis2,aretakis3,aretakis4,aag1,aretakis2012,aretakis2013,harvey2013,harveyeffective,zimmerman1,zimmerman2}). 

On other hand, the above works do not yield global \textit{pointwise lower bounds}\footnote{Note that sharpness of the decay rates of \emph{energy fluxes} along the event horizon was first established by Luk and Oh \cite{luk2015}.} on the size of solutions to \eqref{we}. Furthermore, global quantitative upper bounds or asymptotic expressions of the type \eqref{eq:1}, \eqref{eq:2} and \eqref{eq:3} do not provide the exact leading-order late-time asymptotic terms in terms of explicit expressions of the initial data. In other words, these works do not recover how the coefficients appearing in front of the dominant terms in the asymptotic expansion in time depend precisely on initial data. The derivation of the \emph{exact} late-time asymptotic terms is very important in studying the behavior of scalar fields in the interior of extremal black hole spacetimes \cite{gajic, gajic2, dejanjon1} and also in the Schwarzschild black hole interior \cite{gregjan}. Note that lower bounds for the decay rate of scalar fields moreover play an important role when studying the properties of the interiors of dynamical sub-extremal black holes. This was first shown rigorously in \cite{MD03,MD05c};  see also \cite{MD12, luk2015, LukSbierski2016, DafShl2016, Hintz2015, Franzen2014, Luk2016a,Luk2016b} for related results in the interior of sub-extremal black holes.

The above issue was resolved in our recent work \cite{paper2} where global quantitative estimates were obtained of the form
\begin{equation}
\left|\psi(\tau, r_0,\theta,\varphi)- \boldsymbol{Q}_{\Sigma_0,r_0}[\psi]\cdot\frac{1}{\tau^3}\right| \leq C_{r_0} \cdot \sqrt{E_{\Sigma_0}[\psi]}\cdot \frac{1}{\tau^{3+\epsilon}} 
\label{eq:ours}
\end{equation}
along the hypersurface $\{r = r_0\}\cap \mathcal{J}^{+}(\Sigma_0)$ in Schwarzschild (and a more general class of spherically symmetric, asymptotically flat spacetimes), with $\sqrt{E_{\Sigma_0}[\psi]}$ an initial data norm and $\epsilon>0$ and $C_{r_0}$ positive constants (independent of the initial data for $\psi$). We denote by $\mathcal{J}^{+}(\Sigma_0)$ the future of the Cauchy hypersurface $\Sigma_0$. In particular, the estimate \eqref{eq:ours} holds along the future event horizon $\mathcal{H}^{+}$ where $r_0=2M$.  Here $\tau$ is an appropriate time parameter in $\mathcal{J}^{+}(\Sigma_0)$. In \cite{paper2} we only considered initial hypersurfaces which cross the event horizon \textit{to the future of the bifurcation sphere} and terminate at null infinity. Estimate \eqref{eq:ours} holds for all solutions $\psi$ to \eqref{we} which arise from smooth compactly supported initial data\footnote{Moreover, similar estimates were shown for more general initial data decaying in $r$.} on  $\Sigma_0$. 
\begin{figure}[H]
\begin{center}
\includegraphics[width=6cm]{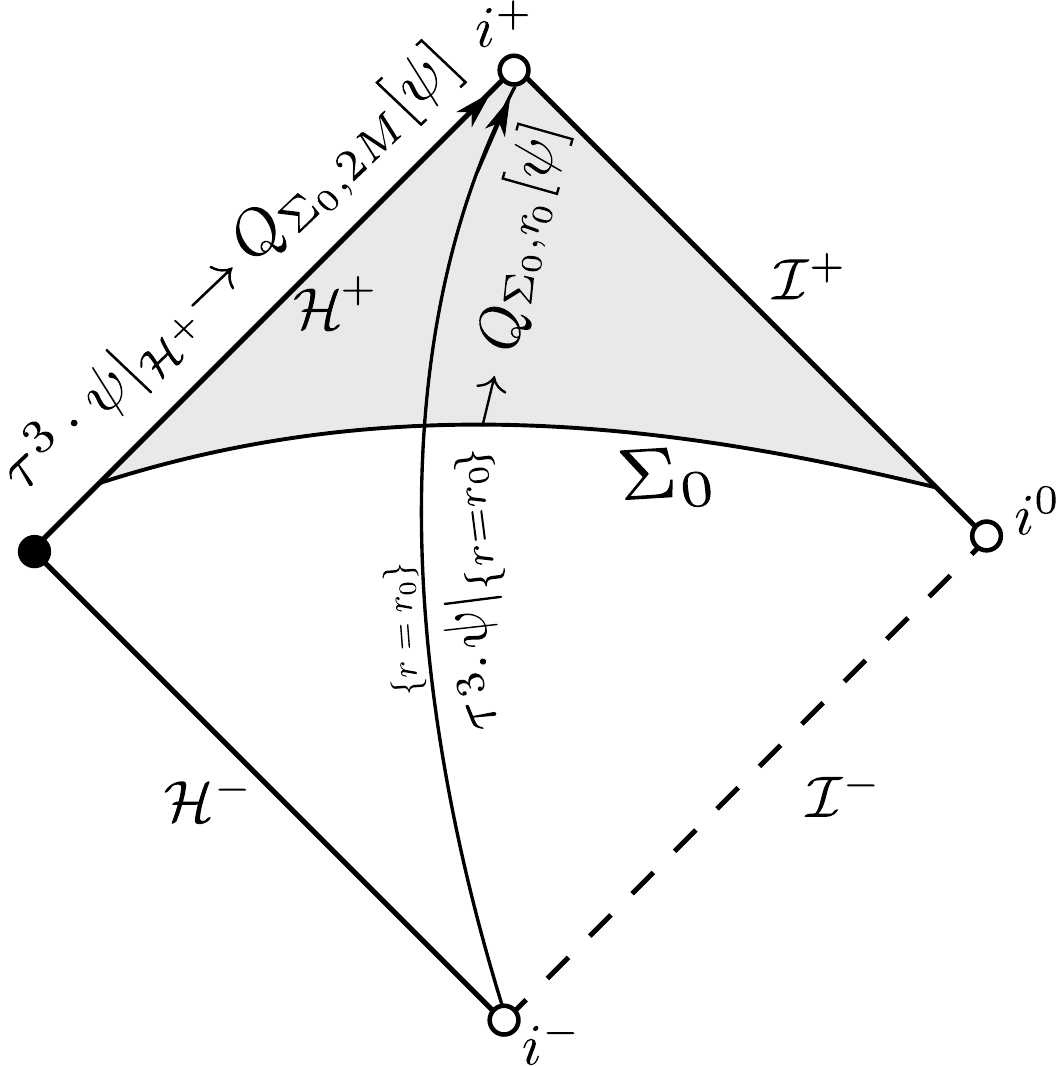}
\caption{\label{fig:1a}The estimate \eqref{eq:ours} provides the asymptotics limits along the hypersurfaces $\{r=r_0\}$.}
\end{center}
\end{figure}
\vspace{-.9cm}

As a result of the method of \cite{paper2}, \textit{an explicit expression of the coefficient of the leading order term $\boldsymbol{Q}_{\Sigma_0,r_0}[\psi]$ in \eqref{eq:ours} was derived in terms of the initial data of $\psi$ on $\Sigma_0$.} In fact, it was shown that $\boldsymbol{Q}_{\Sigma_0,r_0}[\psi]$\textbf{ is independent of} $r_0$:
\[\boldsymbol{Q}_{\Sigma_0,r_0}[\psi]=\boldsymbol{Q}_{\Sigma_0}[\psi]. \]
Furthermore, it was shown that the radiation field satisfies
\begin{equation}
\left|r\psi|_{\mathcal{I}^{+}}(\tau,\theta,\varphi)- \frac{1}{4}\boldsymbol{Q}_{\Sigma_0}[\psi]\cdot\frac{1}{\tau^2}\right| \leq C \cdot \sqrt{E_{\Sigma_0}[\psi]}\cdot \frac{1}{\tau^{2+\epsilon}} 
\label{eq:oursradiation}
\end{equation}
along future null infinity $\mathcal{I}^{+}\cap \mathcal{J}^{+}(\Sigma_0)$.
\begin{figure}[H]
\begin{center}
\includegraphics[width=12cm]{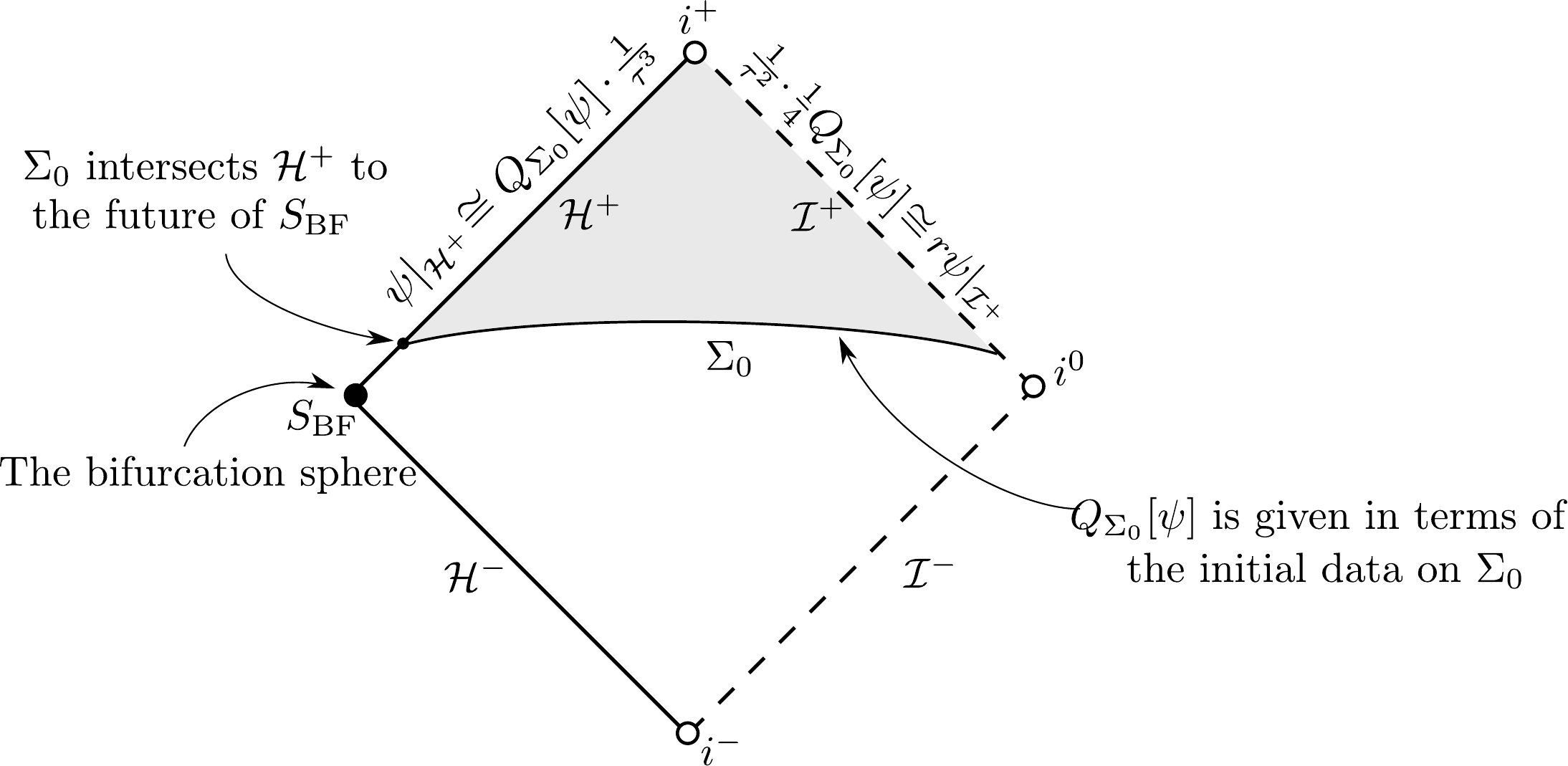}
\caption{\label{fig:2}Precise late-time asymptotics with $Q_{\Sigma_0}[\psi]$ given by the initial data of $\psi$ on a Cauchy hypersurface $\Sigma_0$ to the future of the bifurcation sphere $S_{\text{BF}}$.}
\end{center}
\end{figure}
\vspace{-.5cm}
 This work rigorously showed that \textbf{the precise fall-off the scalar fields depends on the profile of the initial data} and hence is not a universal property of scalar fields due to the background backscattering. 

In fact, in a recent paper \cite{logasymptotics}, we were able to obtain the second-order term in the asymptotic expansion of the radiation field along future null infinity which arises as a logarithmic correction to \eqref{eq:oursradiation} and show that the corresponding coefficient is proportional to the ADM mass $M$ and again to $\boldsymbol{Q}_{\Sigma_0}[\psi]$. For spherically symmetric initial data, we moreover provided in \cite{logasymptotics} the precise dependence on initial data of the \textbf{full} asymptotic expansion of $\psi$ and its radiation field $r\psi|_{\mathcal{I}^{+}}$. Both \cite{paper2,logasymptotics} used purely \textit{physical space techniques}, instead of Fourier analytic methods, and described the origin of the polynomial tails on black hole backgrounds in terms of physical space quantities.

\textit{The estimates \eqref{eq:ours} and \eqref{eq:oursradiation} provided the first rigorous confirmation of the asymptotic statements \eqref{eq:1}, \eqref{eq:2} and \eqref{eq:3}. }In particular, they provided the first global pointwise lower bounds on the scalar fields and their radiation fields. Since those bounds are determined in terms of the quantity $\boldsymbol{Q}_{\Sigma_0}[\psi]$ of the initial data, we obtained as an immediate application a characterization of all smooth, compactly supported initial data which produce solutions to \eqref{we} which decay in time \textbf{exactly} like $\frac{1}{\tau^3}$ to leading order.
It is clear, therefore, that is of great importance, to single out the exact expressions of the initial data which provide the dominant terms in the evolution of the scalar fields.

As was mentioned above, the results in \cite{paper2} hold for initial Cauchy hypersurfaces which emanate from a section of the future event horizon which lies strictly in the future of the bifurcation sphere (see Figure \ref{fig:2}). This restriction was necessary so that in $\mathcal{J}^{+}(\Sigma_0)$, the region under consideration, the stationary Killing vector field $T=\partial_t$ is non-vanishing.

In this paper, we obtain the coefficient of the leading-order late-time asymptotic terms for solutions with smooth compactly supported initial data on \textit{Cauchy hypersurfaces which pass through the bifurcation sphere}. For Schwarzschild backgrounds, an example of such a hypersurface is given by $\{t=0\}$. As we shall see, there are various qualitative differences in this case compared to the case studied in \cite{paper2}. 
\begin{figure}[H]
\begin{center}
\includegraphics[width=5.4cm]{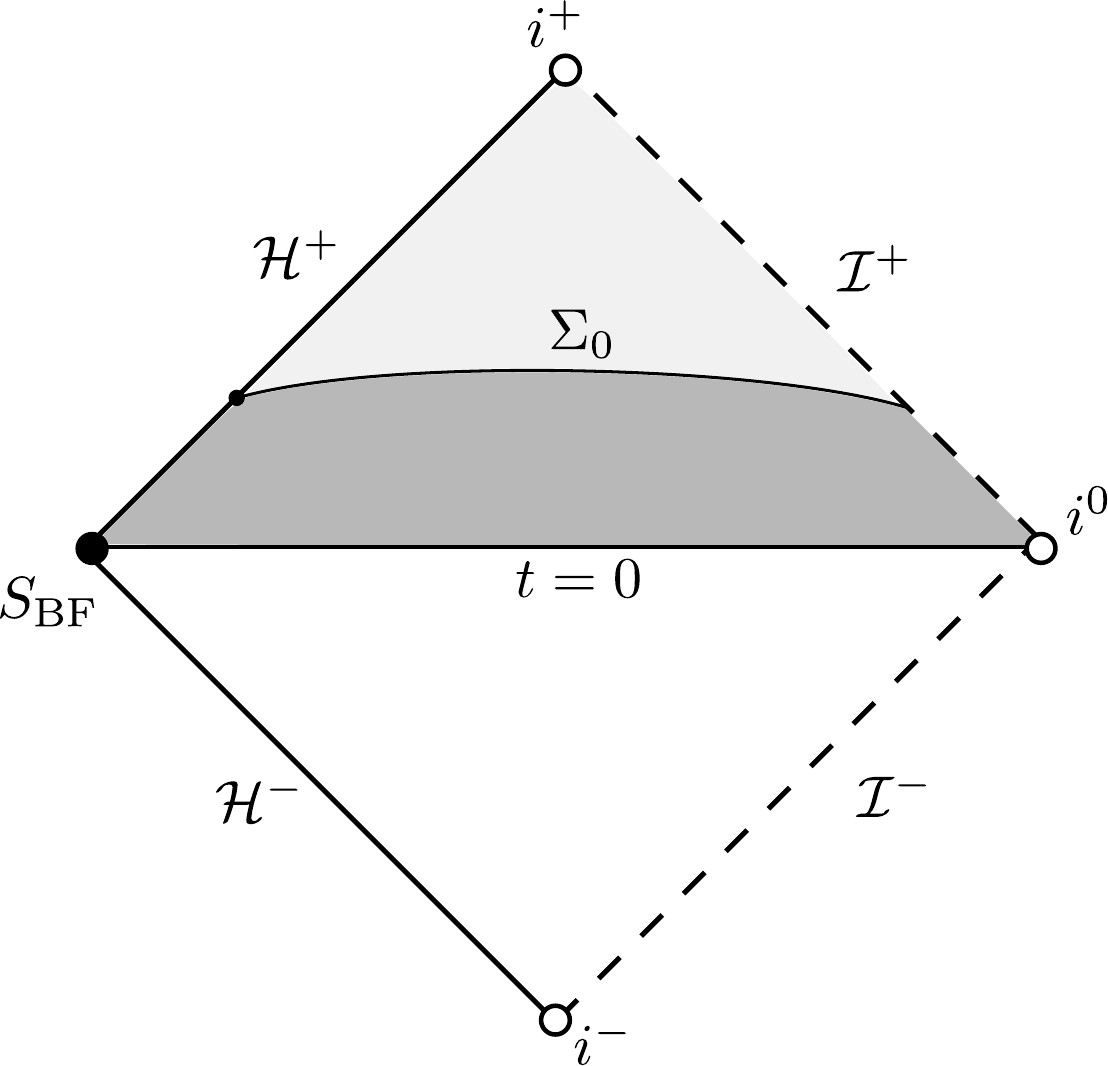}
\caption{\label{fig:3}To find the asymptotics for scalar fields with initial data on $\{t=0\}$ we need to express $\boldsymbol{Q}_{\Sigma_0}[\psi]$ in terms of the initial data on $\{t=0\}$.}
\end{center}
\end{figure}
\vspace{-0.9cm}
Before we present our method and the new results, we provide a brief review of the time integral construction which played a crucial role in \cite{paper2} and allowed us to derive the coefficient $\boldsymbol{Q}_{\Sigma_0}[\psi]$ in the asymptotic expansion in terms of the initial data on $\Sigma_0$ in the case where $\Sigma_0$ does not pass through the bifurcation sphere.

\subsection{The time integral construction and the TINP constant}
\label{sec:ReviewOfTheTimeIntegralConstruction}

The time integral construction of \cite{paper2} concerns the spherical mean
\[\psi_0=\frac{1}{4\pi}\int_{\mathbb{S}^2}\psi\,d\omega,\]
where $d\omega=\sin \theta d\theta d\varphi$.

Indeed, it was shown in \cite{paper1}  that the projection 
\[\psi_1=\psi-\frac{1}{4\pi}\int_{\mathbb{S}^2}\psi \,d\omega\]
decays at least like $\tau^{-3.5+\epsilon}$ (with arbitrarily small $\epsilon>0$) and hence does not contribute to the leading order terms in the late-time asymptotics\footnote{A similar result holds for the radiation field of the projection $\psi_1$.}. 

Given a smooth solution $\psi$ to \eqref{we}, we want to find a smooth solution $\psi^{(1)}$ to \eqref{we} such that
\begin{equation}
T\psi^{(1)}=\psi_0
\label{ti}
\end{equation}
in the future $\mathcal{J}^{+}(\Sigma_0)$ of a Cauchy hypersurface $\Sigma$ which intersects the event horizon strictly to the future of the bifurcation sphere.   It is important to restrict to such hypersurfaces since we have
\begin{equation}
T\neq 0
\label{tnonzero}
\end{equation}
in $\mathcal{J}^{+}(\Sigma_0)$. 
\begin{figure}[H]
\begin{center}
\includegraphics[width=7cm]{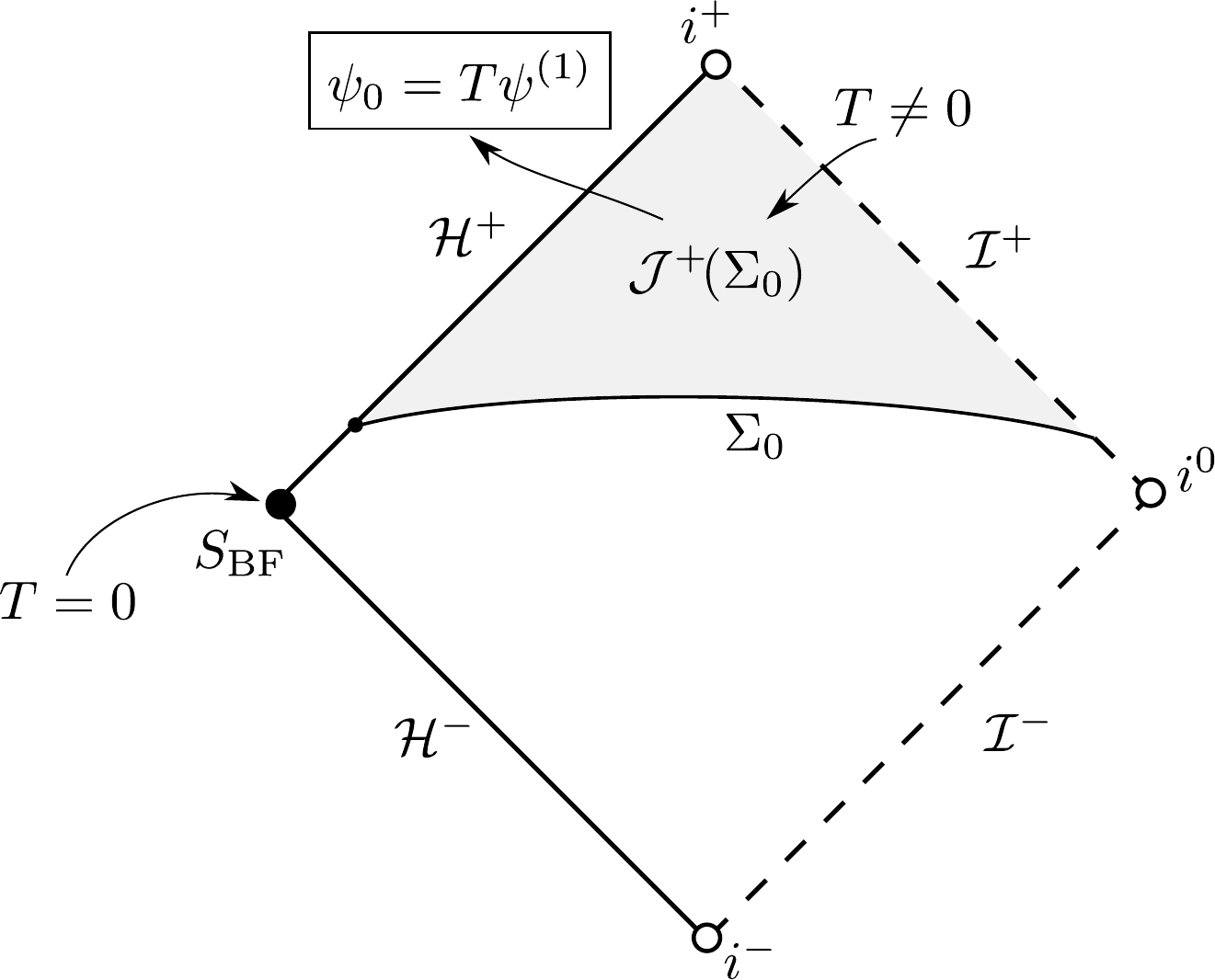}
\caption{\label{fig:4}Time inversion in the region $\mathcal{J}^{+}(\Sigma_0)$.}
\end{center}
\end{figure}
\vspace{-0.9cm}
On the other hand, the stationary Killing field $T=0$ on the bifurcation sphere on Schwarzschild spacetimes\footnote{More generally, the Killing vector field $T$ corresponding to stationarity also vanishes on the bifurcation sphere of all sub-extremal Reissner--Nordstr\"{o}m spacetimes.} and hence, if $\Sigma_0$ intersected the bifurcation sphere then we would not be able to invert the operator $T$ without imposing additional conditions on $\psi_0$. 

Nonetheless,\textit{ there is another obstruction to inverting $T$.}  This obstruction originates from the far-away region and specifically from the existence of \textit{a conservation law along null infinity}. Consider the standard outgoing Eddington--Finkelstein coordinates $(u,r,\omega)$ (with $\omega \in\mathbb{S}^2$) and the function $I_{0}[{\psi}](u)$ on the null infinity $\mathcal{I}^{+}$ given by\footnote{The derivative $\partial_r$ is taken with respect to the $(u,r,\omega)$ coordinate system.}
\[I_{0}[{\psi}](u)=\frac{1}{4\pi}\lim_{r\rightarrow\infty} \int_{\mathbb{S}^{2}}r^2 \partial_r (r{\psi}) (u,r,\omega) \, d\omega. \]
It turns out that if ${\psi}$ solves \eqref{we}, then the function $I_{0}[{\psi}](u)$ is constant, that is independent of $u$. This yields a conservation law along $\mathcal{I}^{+}$. The associated constant 
\begin{equation}
I_{0}[{\psi}]:=I_{0}[{\psi}](u)
\label{np}
\end{equation}
is called the Newman--Penrose constant of ${\psi}$.
\begin{figure}[H]
\begin{center}
\includegraphics[width=6.5cm]{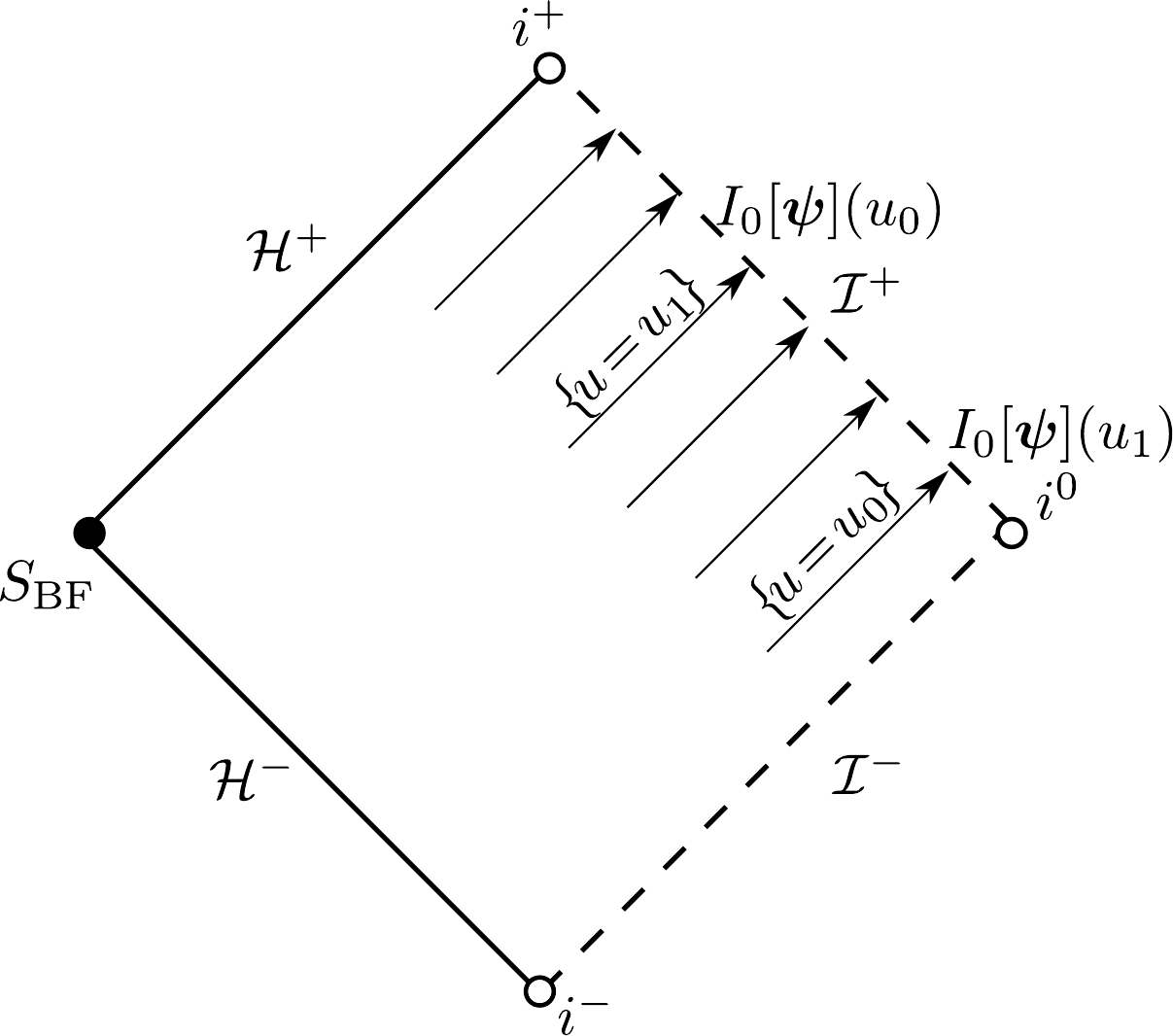}
\caption{\label{fig:5}The Newman--Penrose constant on future null infinity.}
\end{center}
\end{figure}
\vspace{-0.9cm} \textit{The Newman--Penrose constant (and its conservation law along $\mathcal{I}^{+}$) is an obstruction to the invertibility of $T$}.\footnote{We take the domain $T$ to be the space of smooth solutions to the wave equation \eqref{we} with a well-defined Newman--Penrose constant. } Indeed, if the Newman--Penrose constant of ${\psi}$ is well defined (that is the limit of the conformal derivative $r^2\partial_r(r{\psi})$ is bounded on $\Sigma_0$) then 
\[I_0[T{\psi}]=0.\]
Hence, a solution $\psi$ to the wave equation \eqref{we} is not in the range of the operator $T$ unless its Newman--Penrose constant vanishes! This obstruction is present for all asymptotically flat spacetimes.

 If we consider smooth initial data for $\psi$ on $\Sigma_0$ with vanishing Newman--Penrose constant
\[I_{0}[\psi]=0.\] 
such that in fact
\begin{equation}
\lim_{r\rightarrow \infty} \int_{\mathbb{S}^2} r^3 \partial_r(r\psi)|_{\Sigma_0}\,d\omega<\infty 
\label{frompaper2}
\end{equation}then by Proposition 9.1 of \cite{paper2} there is a \textbf{unique} smooth spherically symmetric\footnote{Since $\psi_0$ is spherically symmetric clearly it suffices to look for spherically symmetric solutions $\psi^{(1)}$ satisfying \eqref{ti}} solution
\[\psi^{(1)}:\mathcal{J}^{+}(\Sigma_0)\rightarrow \mathbb{R} \]
of the wave equation \eqref{we} that decays along the Cauchy hypersurface $\Sigma_0$:
\begin{enumerate}
	\item $\lim_{r\rightarrow \infty}\psi^{(1)}|_{\Sigma_0}=0$,
	\item $\lim_{r\rightarrow \infty}r^2\partial_r\psi^{(1)}|_{\Sigma_0}<\infty$
\end{enumerate}
satisfying
\[T\psi^{(1)}=\frac{1}{4\pi}\int_{\mathbb{S}^2} \psi\,d\omega\]
everywhere in $\mathcal{J}^{+}(\Sigma_0)$.

The solution $\psi^{(1)}$ is called the \textbf{time integral} of the sperical mean $\psi_0$ of $\psi$. The Newman--Penrose constant $I_0[\psi^{(1)}]$ of $\psi^{(1)}$ is well-defined and can be explicitly computed in terms of the initial data of $\psi$ on $\Sigma_0$.\begin{figure}[H]
\begin{center}
\includegraphics[width=9.8cm]{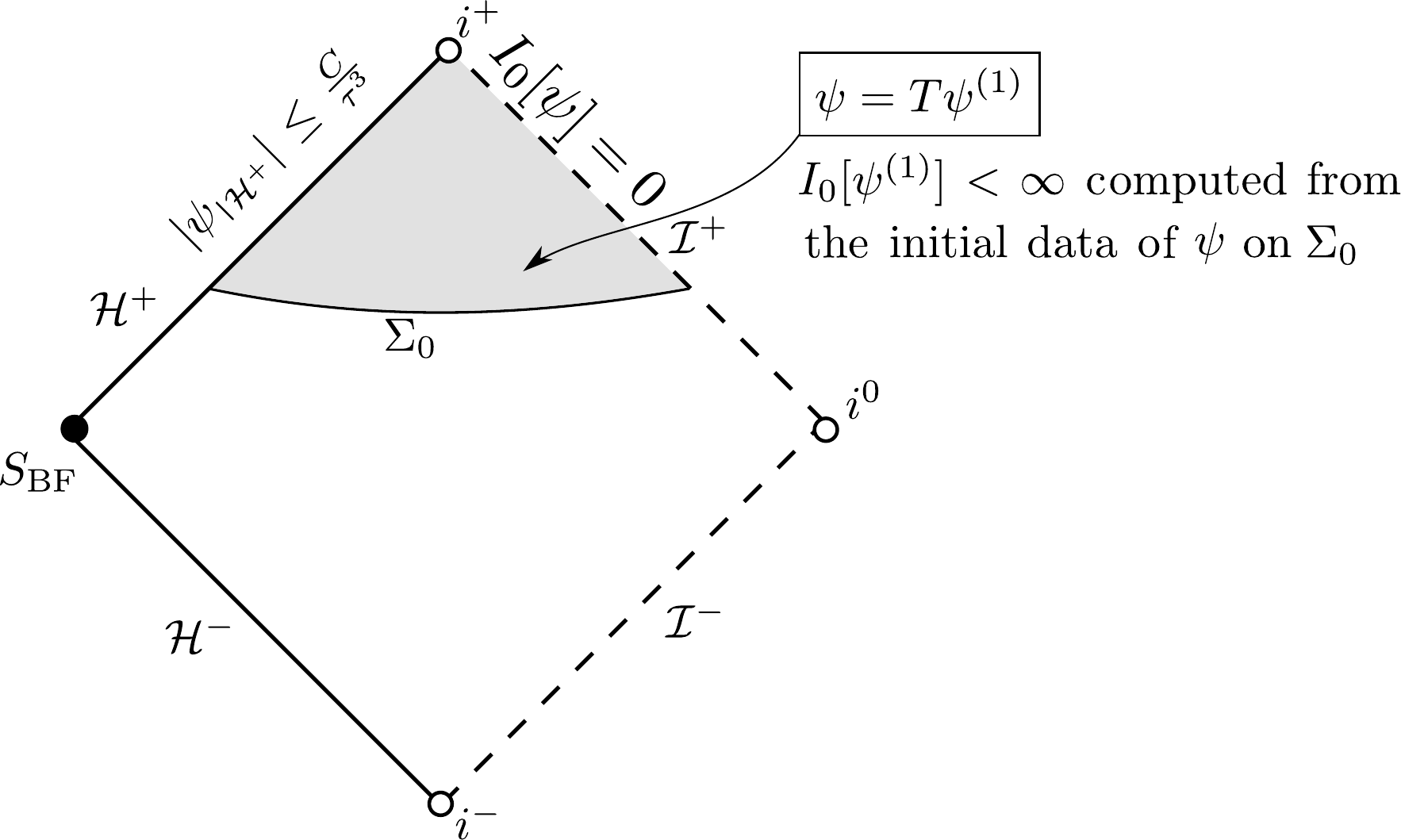}
\caption{\label{fig:65}The domain of $T^{-1}$ has vanishing NP constant and its range has finite NP constant.}
\end{center}
\end{figure}
\vspace{-0.9cm} In particular, if $\Sigma_0$ is  outgoing null for all $r\geq R$, for some large $R>0$, then we have the following formula (\cite{paper2}):
\begin{equation}
\begin{split}
4\pi I_0[\psi^{(1)}]=& -\lim_{r_0\to \infty}\int_{\Sigma_0\cap \{r=r_0\}}r^3\partial_{r}(r\psi) \, d\omega+M\int_{\Sigma_0\cap \{r=R\}} r^2(2-Dh_{\Sigma_0})\psi\, d\omega\\&+M\int_{\Sigma_0\cap\{r\geq R\}}rD \partial_r(r\psi)\,dr'd\omega\\
&-M\int_{\Sigma_0\cap\{r\leq R\}} \Big(2(1-h_{\Sigma_0}D)r\partial_{\rho}(r\psi)-(2-Dh_{\Sigma_0})r^2h_{\Sigma_0} T\psi\\
&-(r^2\cdot (Dh_{\Sigma_0})')\cdot\psi\Big)\,d\rho'd\omega,
\end{split}
\label{generali}
\end{equation}
where $D=1-\frac{2M}{r}$ for Schwarzschild spacetimes, $\partial_r=\frac{2}{D}\partial_v$, $\rho:=r|_{\Sigma_0}$ and $\partial_{\rho}$ is the radial derivative tangential to $\Sigma_0$ that is taken with respect to the induced coordinate system $(\rho,\omega)$ in $\Sigma_0$, and finally $h_{\Sigma_0}$ is defined by the equation
\[\partial_{\rho}= -2D^{-1}\partial_u+h_{\Sigma_0}T. \]
For example, $h_{\{t=0\}}=\frac{1}{D},$ $h_{\{v=v_0 \}}=0,$ $h_{\{u=u_0 \}}=\frac{2}{D}$.

Note that if the initial data for $\psi$ is compactly supported in $\{ r\leq R\}$  then \eqref{generali} reduces to 
\begin{equation}
\begin{split}
&4\pi I_0[\psi^{(1)}]= \\
-M&\int_{\Sigma_0\cap \{r\leq R\}} \Big(2(1-h_{\Sigma_0}D)r\partial_{\rho}(r\psi)-(2-Dh_{\Sigma_0})r^2h_{\Sigma_0} T\psi-(r^2\cdot (Dh_{\Sigma_0})')\cdot\psi\Big)\,d\rho'd\omega,
\end{split}
\label{totakelimit}
\end{equation}
We refer to the constant $I_0[\psi^{(1)}]$ as the \textbf{time-inverted Newman--Penrose (TINP)} constant of $\psi$ and we denote it by $I_{0}^{(1)}[\psi]$.

It is shown in \cite{paper2}  that the coefficient $\boldsymbol{Q}_{\Sigma_0}[\psi]$ in the asymptotic estimates \eqref{eq:ours} and \eqref{eq:oursradiation} is given by 
\begin{equation}
\boldsymbol{Q}_{\Sigma_0}[\psi]=-8I_{0}^{(1)}[\psi].
\label{qi}
\end{equation}
Hence, we obtain the following estimates:
\begin{equation}
\left|\psi(\tau, r_0,\theta,\varphi)+8 I_{0}^{(1)}[\psi]\cdot\frac{1}{\tau^3}\right| \leq C_{r_0} \cdot \sqrt{E_{\Sigma_0}[\psi]}\cdot \frac{1}{\tau^{3+\epsilon}},
\label{eq:ours1}
\end{equation}
\begin{equation}
\left|r\psi|_{\mathcal{I}^{+}}(\tau,\theta,\varphi)+2 I_{0}^{(1)}[\psi]\cdot\frac{1}{\tau^2}\right| \leq C \cdot \sqrt{E_{\Sigma_0}[\psi]}\cdot \frac{1}{\tau^{2+\epsilon}} 
\label{eq:oursradiation1}
\end{equation}

\subsection{Overview of the main results}
\label{outline}

The aim of the present paper is to derive the asymptotic behavior for solutions to the wave equation \eqref{we} with smooth, compactly supported initial data\footnote{More generally, we consider initial data decaying sufficiently fast as $r\rightarrow \infty$.} on Cauchy hypersurfaces \textbf{which pass through the bifurcation sphere}. For simplicity, we will consider in this section smooth, compactly supported initial data on the Cauchy hypersurface $\{t=0\}$.\footnote{Here $t$ is the standard Schwarzschild time coordinate.} Clearly, for such initial data the estimates \eqref{eq:ours1} and \eqref{eq:oursradiation1} hold. Indeed,  solving locally the wave equation \eqref{we} from $\{t=0\}$ to a Cauchy hypersurface $\Sigma_0$ in the future of $\{t=0\}$, which moreover does not intersect the bifurcation sphere, gives rise to smooth initial data on $\Sigma_0$ with vanishing Newman--Penrose constant.\footnote{Note that the induced data on $\Sigma_0$ will not be compactly supported unless $\Sigma_0$ is contained in the domain of dependence of the region $\{r\geq R\}\cap \{t=0\}$ where the solution is zero. Nonetheless, in view of the conservation law discussed in Section \ref{sec:ReviewOfTheTimeIntegralConstruction}, the Newman--Penrose constant for the induced data on $\Sigma_0$ is necessarily zero.} In fact, as was shown in \cite{paper2}, the condition \eqref{frompaper2} holds for the induced data on $\Sigma_0$. Hence, the time integral construction can be applied in the region $\mathcal{J}^{+}(\Sigma_0)$ which allows us to obtain the estimates \eqref{eq:ours1} and \eqref{eq:oursradiation1}, where $I_0^{(1)}[\psi]$ is given by \eqref{generali} in terms of the induced data on $\Sigma_0$. Note that in this case,  the estimates \eqref{eq:ours1} and \eqref{eq:oursradiation1} in principle \textbf{provide only upper bounds} since we a priori have no control on the constant $\boldsymbol{Q}_{\Sigma_0}[\psi]$ in terms of the initial data on $\{t=0\}$. Indeed, in order for \eqref{eq:ours1} and \eqref{eq:oursradiation1} to yield lower bounds we must prove that initial data on $\{t=0\}$ are consistent with 
\begin{equation}
I_0^{(1)}[\psi]\neq 0.
\label{qnecessary}
\end{equation}
However, such a condition cannot be a priori confirmed for initial data on $\{t=0\}$. A simple backwards construction can be used to show the existence of smooth compactly supported initial data on $\{t=0\}$ for which the condition  \eqref{qnecessary} holds. Hence, it immediately follows that for generic smooth compactly supported initial data on $\{t=0\}$ the condition \eqref{qnecessary} holds. However, the following issue remains unresolved:
\begin{itemize}
	\item Find all initial data on $\{t=0\}$ which satisfy \eqref{qnecessary}. 
\end{itemize}
In fact, we would like to address the following more general issue:
\begin{itemize}
\item Obtain an explicit expression of the coefficient $\boldsymbol{Q}_{\Sigma_0}[\psi]$ in terms of the initial data on $\{t=0\}$. 
\end{itemize}
It is only after we have resolved these issues above that we can for example provide a complete characterization of all initial data on $\{t=0\}$ which produce solutions to \eqref{we} which satisfy Price's inverse polynomial law $\tau^{-3}$ as a \textit{lower} bound. 

Clearly, in view of the vanishing of $T$ on the bifurcation sphere,  the time integral construction of Section \ref{sec:ReviewOfTheTimeIntegralConstruction} breaks down in the region $\mathcal{J}^{+}(\{t=0\})$ and hence cannot be used to express the coefficient $\boldsymbol{Q}_{\Sigma_0}[\psi]$ as a time-inverted Newman--Penrose constant. Similarly, we cannot simply evaluate the right hand side of \eqref{generali} on $\{t=0\}$. One legitimate approach would be to consider a sequence of Cauchy hypersurfaces $\Sigma_{i}$ such that 
\begin{enumerate}
	\item for all $i$, $\Sigma_{i}$ intersects the event horizon to the future of the bifurcation sphere,
	\item as $i\rightarrow \infty$ the hypersurfaces $\Sigma_i\cap\{r\leq R\}$ tend to the hypersurface $\{t=0\}$.  
\end{enumerate}
Then clearly, for each (finite) $i\geq 0$, we can express $\boldsymbol{Q}_{\Sigma_i}[\psi]$ via the induced data on $\Sigma_i$. We then simply have to examine if the limit $\lim_{i\rightarrow\infty}\boldsymbol{Q}_{\Sigma_i}[\psi]$ is well-defined and subsequently compute it. Although the above procedure is possible, we pursue in this paper a different approach which yields much more general results about the domain of validity, the regularity and the explicit expression via geometric currents of the time-inverted Newman--Penrose constants.

The main new observation is that for any Cauchy hypersurface $\Sigma$ which intersects the event horizon to the future of the bifurcation sphere, the constant $\boldsymbol{Q}_{\Sigma_0}[\psi]$\footnote{This constant is defined via \eqref{generali}, as an integer multiple of the time-inverted Newman--Penrose constant of the time integral in $\mathcal{J}^{+}(\Sigma_0)$.} is given by an appropriate modification of the gradient flux on $\Sigma$:
\[\int_{\Sigma}\nabla\psi\cdot {n}_{\Sigma}\, d\mu_{\Sigma},\]
where ${n}_{\Sigma}$ is the normal to $\Sigma$ and the integral is taken with respect to the standard volume form $d\mu_{\Sigma}$ corresponding to the induced metric on $\Sigma$. 
Note that the above gradient flux is generically infinite, however, the following modified flux 
\begin{equation}
\lim_{r_0\rightarrow \infty}\left(\int_{\Sigma\cap\{r\leq r_0\}} \nabla\psi\cdot {n}_{\Sigma}\,d\mu_{\Sigma}+\int_{\Sigma\cap\{r=r_0\}}\Big(\psi-\frac{2}{M}r\partial_v(r\psi)\Big)r^2d\omega\right)
\label{trunc1}
\end{equation}
is indeed finite for all hypersurfaces $\Sigma$ (see Lemma \ref{remarkderivation}), where $\partial_v$ is the standard outgoing null derivative. If we define
\begin{equation}
G(\Sigma^{\leq r_0})[\psi]=\int_{\Sigma\cap\mathcal{H}^{+}}\!\!\psi\,r^2d\omega+\int_{\Sigma\cap\{r\leq r_0\}} n_{\Sigma}(\psi)\,d\mu_{\Sigma}+\int_{\Sigma\cap\{r=r_0\}}\Big(\psi-\frac{2}{M}r\partial_v(r\psi)\Big)\,r^2d\omega,
\label{trunc1b}
\end{equation}
then the main new result of this paper is the following identity for the TINP constant of $\psi$ 
\begin{equation}
I_{0}^{(1)}[\psi]=\frac{M}{4\pi}\lim_{r_0\rightarrow \infty}G(\Sigma^{\leq r_0})[\psi].
\label{geomiequa1}
\end{equation}
\begin{figure}[H]
\begin{center}
\includegraphics[width=11cm]{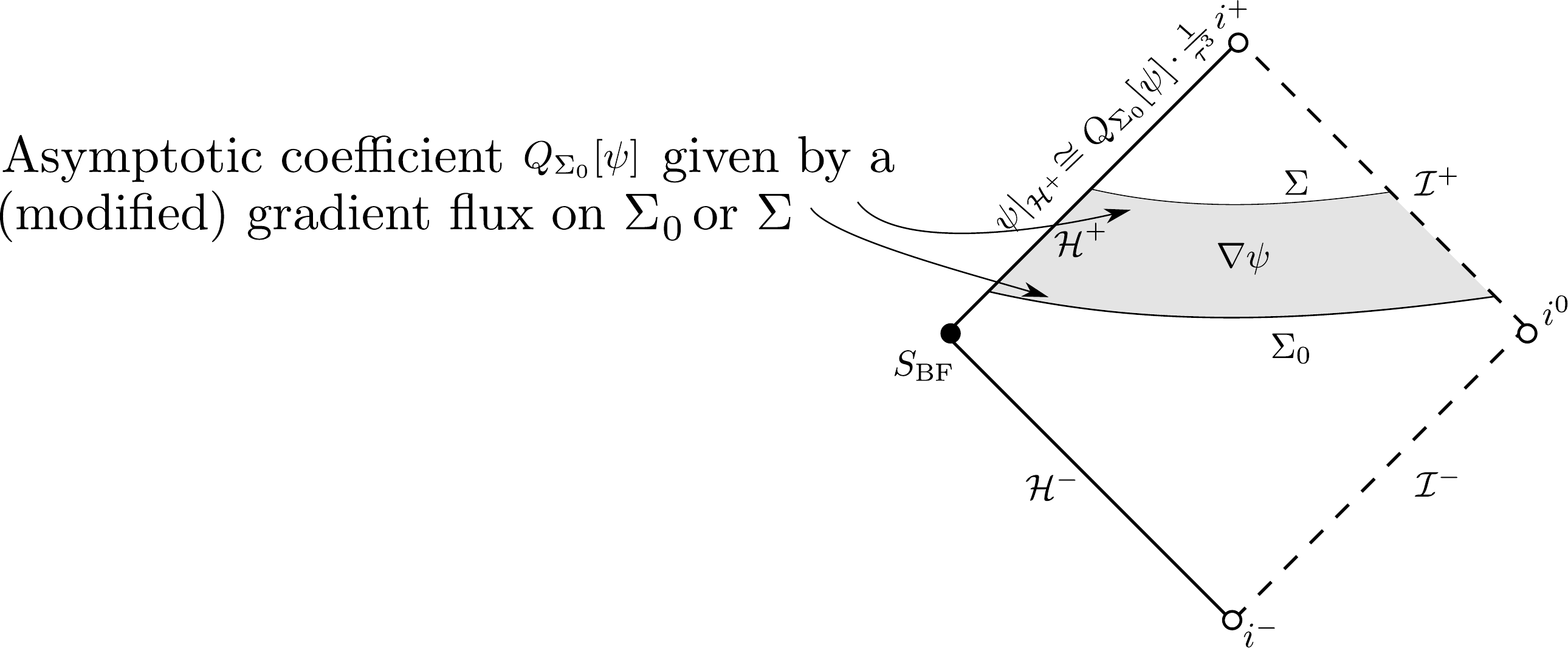}
\caption{\label{fig:6}The asymptotic coefficient as a gradient flux through Cauchy hypersurfaces.}
\end{center}
\end{figure}
\vspace{-0.9cm}
This provides a new geometric interpretation of the coefficient $\boldsymbol{Q}_{\Sigma_0}[\psi]$ of the leading-order terms in the asymptotic expansion (which, recall, is equal to $-8I_{0}^{(1)}[\psi]$) as an appropriately modified gradient flux. Clearly, by the definition of the wave equation, the gradient of a scalar field solution is divergence-free and hence satisfies conservation laws in any compact region. It turns out that for solutions to \eqref{we} with vanishing Newman--Penrose constant, \textit{the modified gradient flux given by the right hand side of \eqref{geomiequa1} satisfies a conservation law for all (unbounded) regions bounded by Cauchy hypersurfaces}. In other words, the limit $\lim_{r_0\rightarrow \infty}G(\Sigma^{\leq r_0})[\psi]$ is independent of the choice of hypersurface $\Sigma$ (see Proposition \ref{prop2deri}).  This conservation law immediately allows us to compute the value of 
$I_{0}^{(1)}[\psi]$ in terms of the initial data on hypersurfaces passing through the bifurcation sphere, even though the former was originally defined in terms of the time integral construction in the region $\mathcal{J}^{+}(\Sigma_0)$ which does not contain the bifurcate sphere and hence \textit{formally extend the domain of validity of $I_{0}^{(1)}[\psi]$ to all Cauchy hypersurfaces regardless of the vanishing of $T$.} For example, for smooth, compactly supported initial data on the hypersurface $\{t=0\}$ we have
 \begin{equation}\boxed{
I_0^{(1)}[\psi]= \frac{M}{4\pi}\int_{ S_{\text{BF}}}\!\!\psi \, r^2d\omega+\frac{M}{4\pi}\int_{\{t=0\}}\ \frac{1}{1-\frac{2M}{r}}\partial_t\psi\, r^2 dr d\omega,}
 \label{ont0}
 \end{equation}
where $S_{\text{BF}}$ denotes the bifurcation sphere $\{t=0\}\cap \{r=2M\}$. 

Note that the second integral on the right hand side is finite since $T=\partial_t$ vanishes at $\{t=0\}\cap \{r=2m\}$ and in fact satisfies
\begin{equation}
n_{\{t=0\}}=\frac{1}{\sqrt{1-\frac{2M}{r}}} \cdot \partial_t,
\label{nt0}
\end{equation}
where $n_{\{t=0\}}$ is the unit normal to $\{t=0\}$.    Hence, \eqref{ont0} allows us to explicitly compute the coefficient in the asymptotic estimates \eqref{eq:ours1} and \eqref{eq:oursradiation1}  in terms of the initial data on $\{t=0\}$.\footnote{ It is important to remark that had we simply evaluated the expression for $I_0^{(1)}[\psi]$ using \eqref{generali} on $t=0$ (for which $h_{\{t=0\}}=\frac{1}{1-\frac{2M}{r}}$)  then we would have missed the first term on the right hand side of \eqref{ont0}.}
\begin{figure}[H]
\begin{center}
\includegraphics[width=14cm]{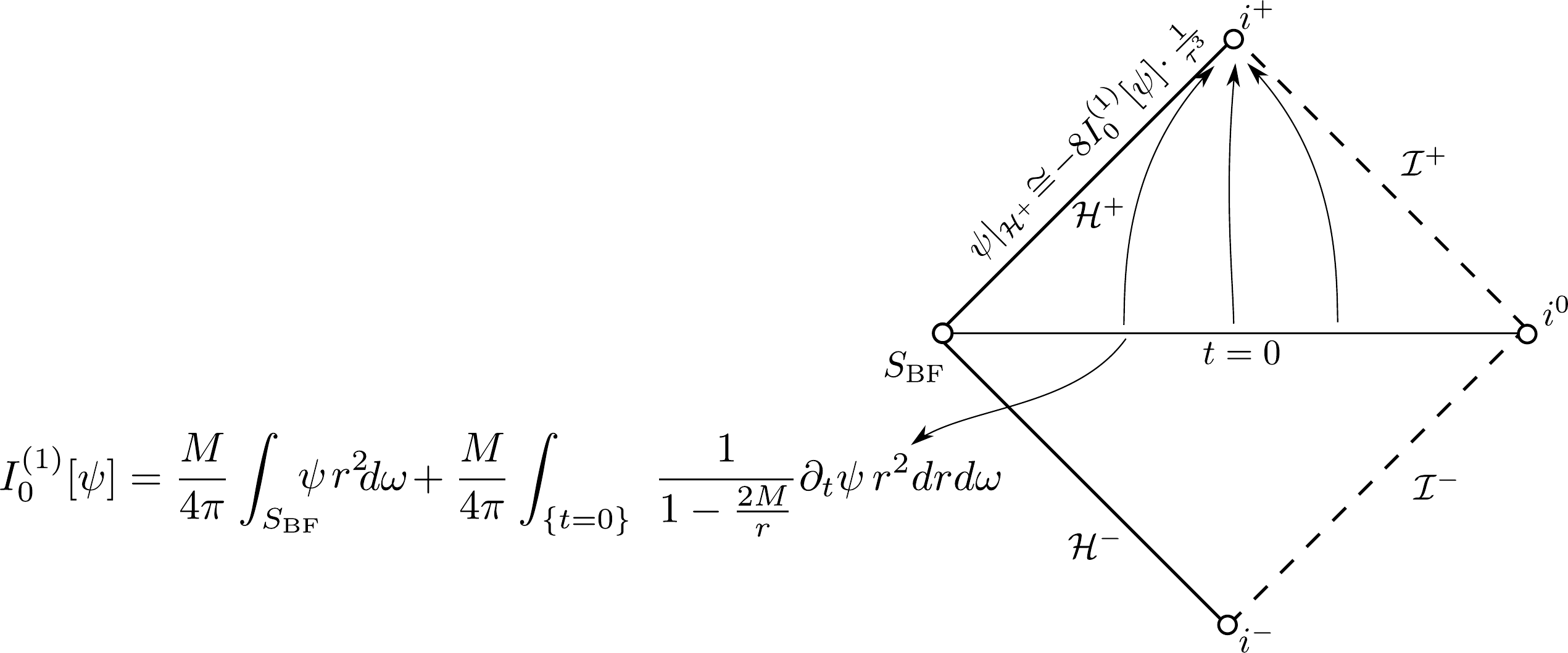}
\caption{\label{fig:7}The asymptotic coefficient in terms of the initial data on $\{t=0\}$.}
\end{center}
\end{figure}
\vspace{-0.9cm}
The new formula \eqref{ont0} allows us to reach several interesting conclusions.
\begin{enumerate}
\item \textit{Evolution/Fall-off of general data:} The late-time asymptotics for smooth compactly supported initial data on $\{t=0\}$ is given by \eqref{eq:ours1}, \eqref{eq:oursradiation1} and \eqref{ont0}. Schematically, we have asymptotically in time as $\tau\rightarrow \infty$
\[\psi(\tau, r_0,\theta,\varphi)\sim -8I_0^{(1)}[\psi]\cdot \frac{1}{\tau^3}, \ \ \ r\psi|_{\mathcal{I}^{+}}(\tau, r=\infty,\theta,\varphi)\sim  -2I_{0}^{(1)}[\psi]\cdot \frac{1}{\tau^2},\]
where $I_0^{(0)}[\psi]$ is given by the explicit expression \eqref{ont0} of the initial data on  $\{t=0 \}$.
	\item \textit{Evolution/Fall-off of time-symmetric (initially static) data:} Initial data on $\{t=0\}$ are called static (or time-symmetric) if 
	\begin{equation}
	\partial_t\psi|_{\{t=0\}}=0.
	\label{staticdata}
	\end{equation}
	In this case, the constant $I_{0}^{(1)}[\psi]$ reduces to 
	\[ I_{0}^{(1)}[\psi]= \frac{M}{4\pi}\int_{\{t=0\}\cap S_{\text{BF}}}\!\!\psi \, r^2d\omega.\]
	Hence, \textbf{the restriction to time-symmetric initial data will generically not improve the corresponding fall-off in time.} The fall-off improves by one power under this restriction if and only if, \underline{in addition}, the spherical mean of $\psi$ on the bifurcation sphere vanishes.
	\begin{figure}[H]
\begin{center}
\includegraphics[width=13.5cm]{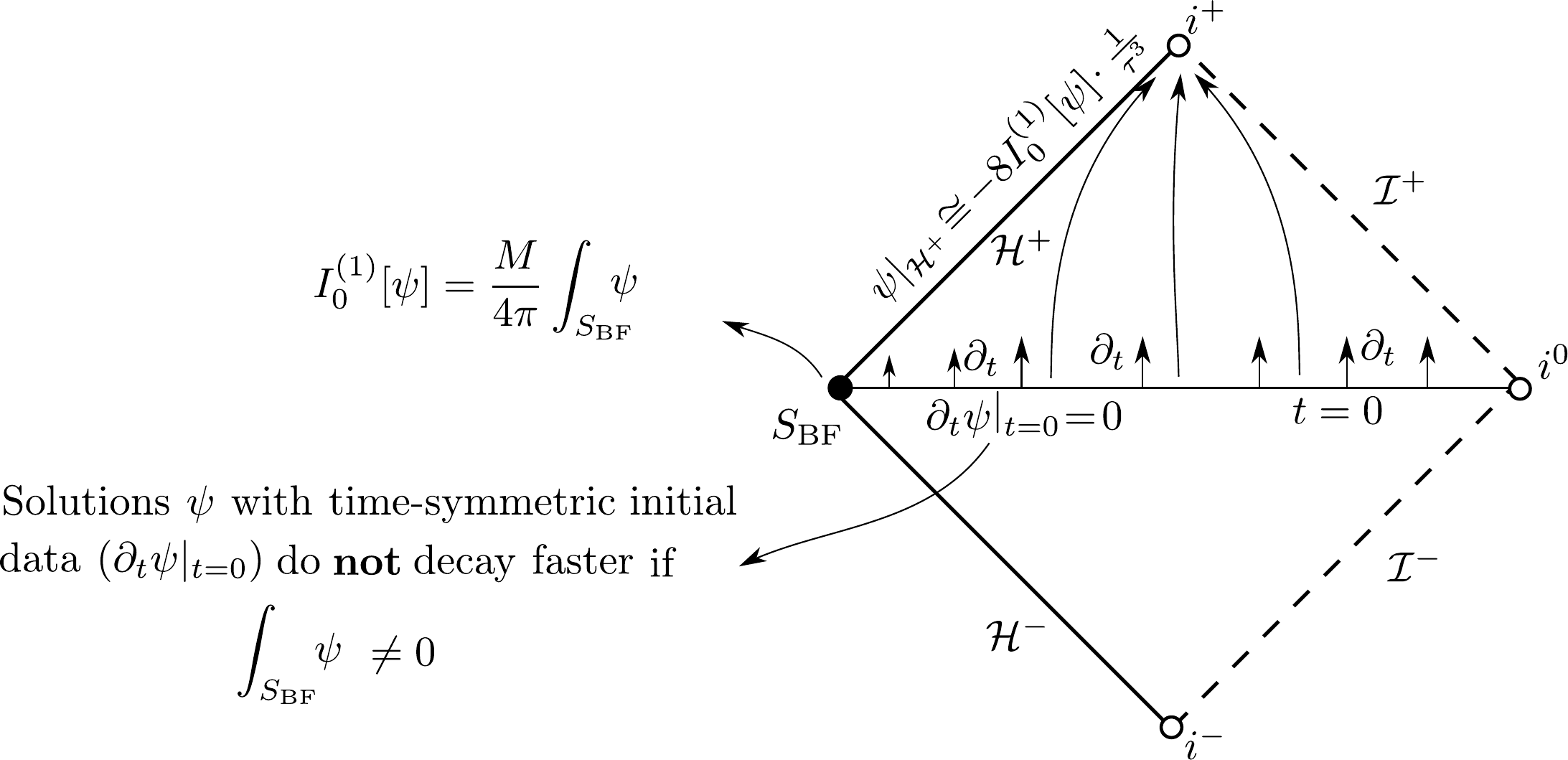}
\caption{\label{fig:7b}The asymptotic coefficient in terms of the initial data on $\{t=0\}$.}
\end{center}
\end{figure}
\vspace{-0.9cm}This provides a complete description of the asymptotic behavior of time-symmetric initial data and hence confirms and extends the numerical work of \cite{karzdma}, the heuristic work of \cite{priceburko}. 

We moreover note that the above observations are consistent with the work of \cite{dssprice} where the authors consider suitably decaying initial data supported away from the bifurcation sphere and show that for each spherical harmonic mode $\psi_{\ell}$ one can estimate
\begin{equation*}
\left| (1+r_*^2)^{-\ell-\frac{3}{2}}\psi_{\ell}\right|(t,r,\theta,\varphi)\lesssim t^{-2-2\ell}D_1[\partial_t\psi_{\ell}|_{t=0}]+ t^{-3-2\ell}D_2[\psi_{\ell}|_{t=0}],
\end{equation*}
where $D_1[\partial_t\psi_{\ell}|_{t=0}]$ is a weighted $L^1$ norm depending only on $\partial_t\psi_{\ell}|_{t=0}$ and $D_2[\psi_{\ell}|_{t=0}]$ is a weighted $L^1$ norm depending only on $\psi_{\ell}|_{t=0}$. Although the decay rates appearing in the estimate above are not the expected sharp decay rates (cf. \eqref{eq:ours} for the $\ell=0$ case and the $\ell$-dependent decay rates suggested by \cite{Price1972}), the estimate illustrates nicely how the decay rate increases by one power if one restricts to initially static data ($\partial_t\psi_{\ell}|_{t=0}=0$).
	
	\item \textit{Evolution/Fall-off of initially vanishing data:} Such data satisfy
	\[\psi|_{\{t=0\}}=0.\]
	In this case $I_{0}^{(1)}[\psi]$ reduces to 
	\[ I_{0}^{(1)}[\psi]= \frac{M}{4\pi}\int_{\{t=0\}}\ \frac{1}{1-\frac{2M}{r}}\partial_t\psi\, r^2 dr d\omega.\]
	Hence generic such initial data have the same fall-off as the general data.
	
\end{enumerate}

Furthermore, the results of \cite{paper2} and the expression \eqref{ont0} of the TINP constant $I_0^{(1)}[\psi]$ allow us to compare the asymptotic behavior of scalar fields based on the profile of the initial data on the following two (types of) Cauchy hypersurfaces:
\begin{enumerate}
	\item $\Sigma_0$ which does not intersect the bifurcation sphere (and hence intersects the event horizon to the future of the bifurcation sphere)
	\item $\{t=0\}$ which passes through the bifurcation sphere. 
\end{enumerate}
For a Cauchy hypersurface $\Sigma$ (of either type above) we consider the following function spaces: 
\begin{equation*}
\begin{split}
H_{I_0<\infty}(\Sigma)=&\left\{\psi \in C^{\infty}\left(\mathcal{J}^{+}(\Sigma)\right): \ \Box_g\psi=0 \text{ and } I_0[\psi]<\infty  \right\},\\
H_{I_0\neq 0}(\Sigma)=&\left\{\psi\in C^{\infty}\left(\mathcal{J}^{+}(\Sigma)\right): \ \Box_g\psi=0 \text{ and } 0\neq I_0[\psi]<\infty  \right\},\\
H_{I_0=0}(\Sigma)=&\left\{\psi\in C^{\infty}\left(\mathcal{J}^{+}(\Sigma)\right): \ \Box_g\psi=0 \text{ with  c.s. and s.s. initial data on } \Sigma  \right\},\\
H_{3}(\Sigma)=&\left\{ \psi\in H_{I_0=0}(\Sigma):\ \tau^3\cdot  \psi(\tau,r_0,\cdot) \nrightarrow 0 \right\},\\
H_{\geq 4}(\Sigma)=&\left\{ \psi\in H_{I_0=0}(\Sigma):\ \tau^3\cdot  \psi(\tau,r_0,\cdot) \rightarrow 0\right\},\\
\end{split}
\end{equation*}
where ``c.s=compactly supported'' and ``s.s.=spherically symmetric''\footnote{ Recall from Section \ref{sec:ReviewOfTheTimeIntegralConstruction} that the spherical mean dominates the asymptotic fall-off behavior.}   and $I_0[\psi]$ denotes the Newman--Penrose constant of $\psi$ at null infinity. Clearly, the Newman--Penrose constant of solutions in the space $H_{I_0=0}(\Sigma)$ vanishes. 

According to the results in \cite{paper2} for the hypersurface $\Sigma_0$ intersecting $\mathcal{I}^+$:
\[ H_{I_0<\infty}(\Sigma_0)\supseteq H_{I_0\neq 0}(\Sigma_0) \supseteq H_{I_0=0}(\Sigma_0)=  H_{3}(\Sigma_0)\cup H_{\geq 4}(\Sigma_0).  \]
Furthermore, \textbf{we have  the following invertibility properties for the operator $T$ in $\mathcal{J}^{+}(\Sigma_0)$}:
\begin{equation}
\text{ For all }\psi\in H_{I_0=0}(\Sigma_0) \text{ there is } \psi^{(1)}\in H_{I_0<\infty}(\Sigma_0)  \text{ such that } T\psi^{(1)}=\psi. 
\label{statement1}
\end{equation}Furthermore, $\psi(\tau,r_0,\cdot)$ decays \textbf{at least as fast as }$\tau^{-3}$ and $\psi^{(1)}(\tau,r_0,\cdot)$ decays at least as fast as $\tau^{-2}$.  On the other hand, we have
\begin{equation}
\text{ For all }\psi\in H_{3}(\Sigma_0) \text{ there is } \psi^{(1)}\in H_{I_0\neq 0}(\Sigma_0)  \text{ such that } T\psi^{(1)}=\psi.
\label{statement2}
\end{equation}
The above characterizes all solutions $\psi$ which decay \textbf{exactly} like $\tau^{-3}$. The following characterizes all solutions which decay {at least} one power faster, that is \textbf{at least as fast as} $\tau^{-4}$:
\begin{equation}
\text{ For all }\psi\in H_{\geq 4}(\Sigma_0) \text{ there is } \psi^{(1)}\in H_{I_0<\infty}(\Sigma_0)  \text{ such that } TT\psi^{(1)}=\psi.
\label{statement3}
\end{equation}

\textbf{The $T$-invertibility statements \eqref{statement1}, \eqref{statement2} and  \eqref{statement3} are not valid for the hypersurface $\{t=0\}$}\footnote{The $T$-invertibility properties had already been studied by Wald \cite{drimos} and Kay--Wald \cite{Kay1987} in the context of obtaining uniform boundedness for solutions to the wave equation.}. 
\begin{itemize}
	\item Statement \eqref{statement1} is clearly not true for the hypersurface $\{t=0\}$ since   for $\psi\in H_{I_0=0}(\{t=0\})$ we generically have $\psi|_{S_{\text{BF}}}\neq 0$ whereas $T=0$ at the bifurcation sphere. 
		\begin{figure}[H]
\begin{center}
\includegraphics[width=12cm]{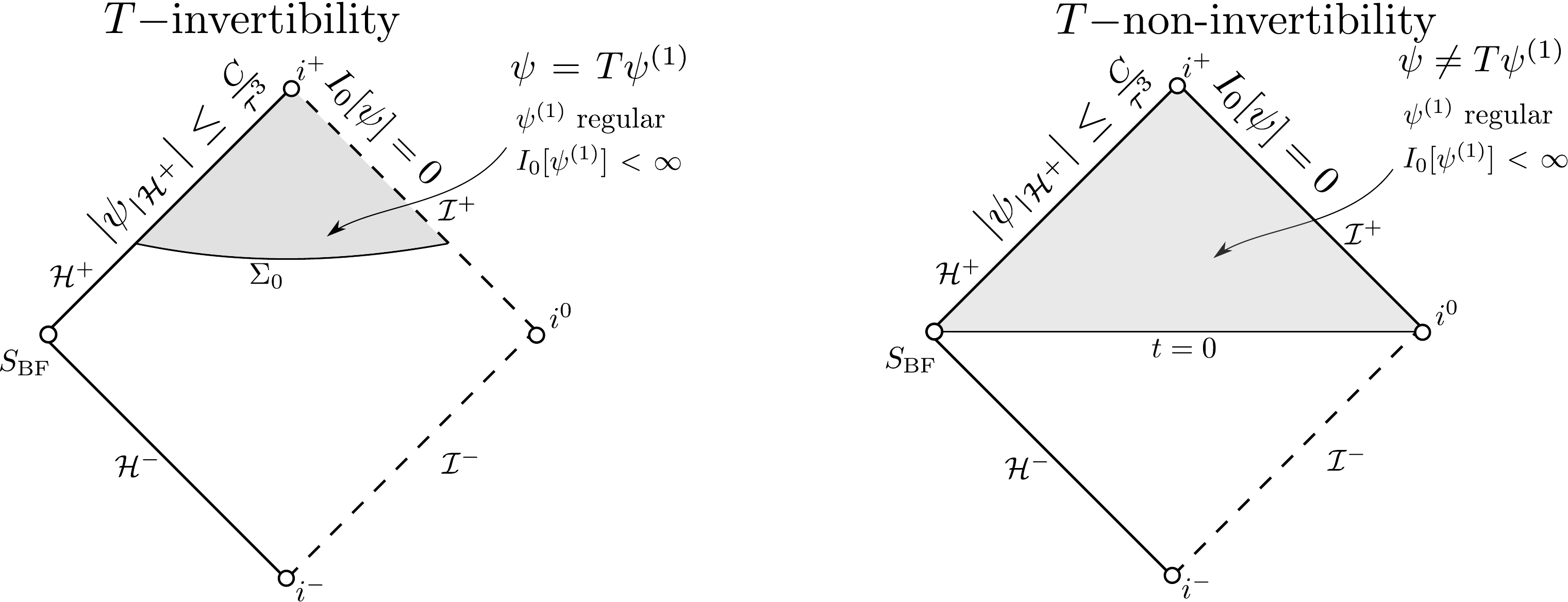}
\caption{\label{fig:10}$T-$invertibility in $\mathcal{J}^{+}(\Sigma_0)$ and $T-$non-invertibility in $\mathcal{J}^{+}(\{t=0\})$.}
\end{center}
\end{figure}
\vspace{-0.9cm}
	\item Statement \eqref{statement2} is not true for $\{t=0\}$ since all  $\psi\in H_{3}(\{t=0\})$ satisfy, in view of \eqref{ont0},
	\[\frac{M}{4\pi}\int_{ S_{\text{BF}}}\!\!\psi \, r^2d\omega+\frac{M}{4\pi}\int_{\{t=0\}}\ \frac{1}{1-\frac{2M}{r}}\partial_t\psi\, r^2 dr d\omega\neq 0,\] and hence, they generically satisfy $\psi|_{S_{\text{BF}}}\neq 0$, whereas $T=0$ at the bifurcation sphere.
	\begin{figure}[H]
\begin{center}
\includegraphics[width=12cm]{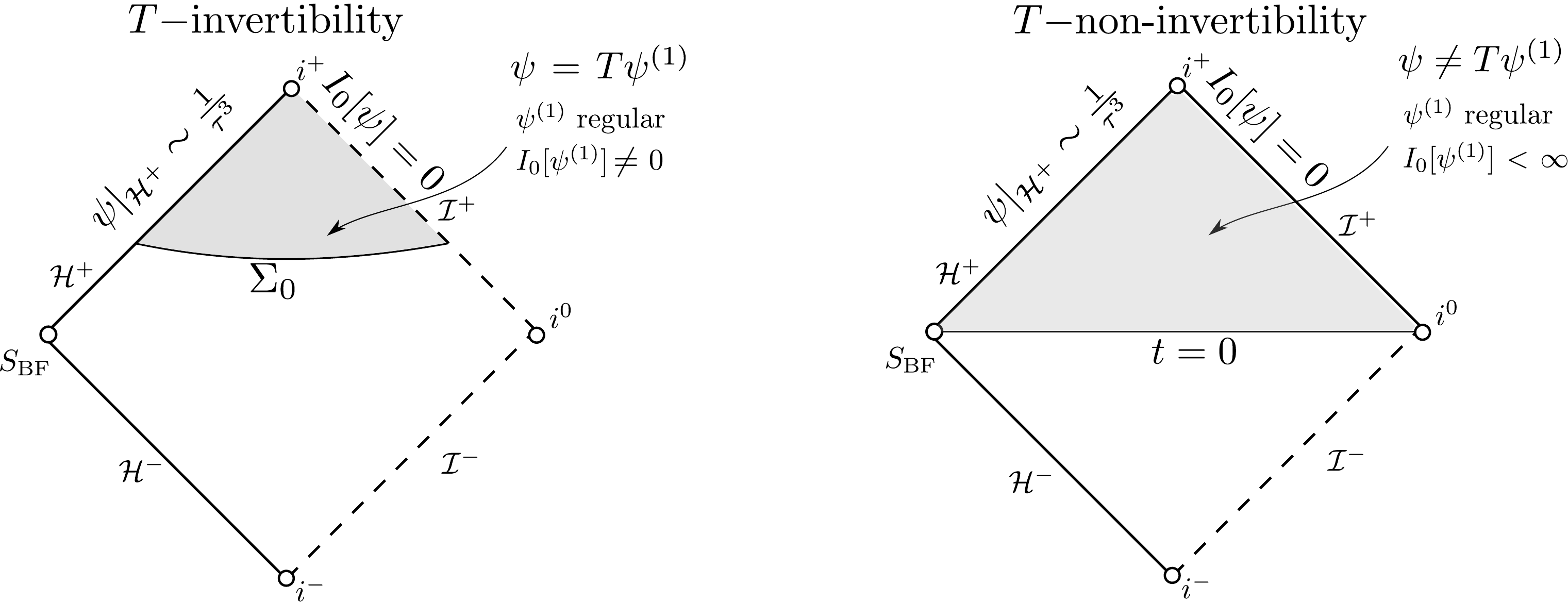}
\caption{\label{fig:11}$T-$invertibility in $\mathcal{J}^{+}(\Sigma_0)$ and $T-$non-invertibility in $\mathcal{J}^{+}(\{t=0\})$.}
\end{center}
\end{figure}
\vspace{-0.9cm}
	\item  Statement \eqref{statement3} is not true for $\{t=0\}$ since all  $\psi\in H_{\geq 4}(\{t=0\})$ satisfy, in view of \eqref{ont0},
	\[\frac{M}{4\pi}\int_{ S_{\text{BF}}}\!\!\psi \, r^2d\omega+\frac{M}{4\pi}\int_{\{t=0\}}\ \frac{1}{1-\frac{2M}{r}}\partial_t\psi\, r^2 dr d\omega= 0,\] and hence, they generically still satisfy $\psi|_{S_{\text{BF}}}\neq 0$, whereas $T=0$ at the bifurcation sphere. 
	\begin{figure}[H]
\begin{center}
\includegraphics[width=12cm]{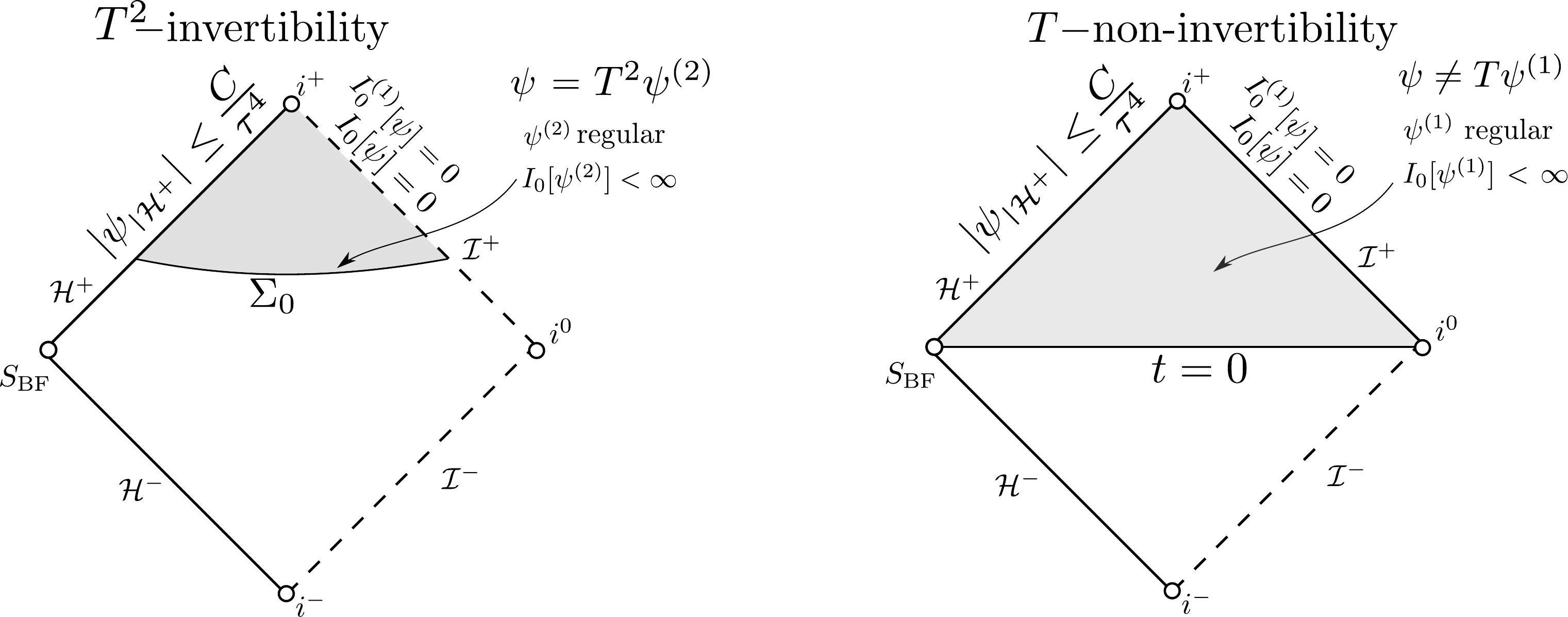}
\caption{\label{fig:12}$T^2-$invertibility in $\mathcal{J}^{+}(\Sigma_0)$ and $T-$non-invertibility in $\mathcal{J}^{+}(\{t=0\})$.}
\end{center}
\end{figure}
\vspace{-0.9cm}
\end{itemize}
In short, we conclude that  smooth solutions to \eqref{we} in $\mathcal{J}^{+}(t=0)$ that decay strictly faster than $\tau^{-3}$ generically 
\begin{enumerate}
	\item do not arise from time-symmetric initial data on $\{t=0\}$, and
	\item do not arise as the $T$ derivative of regular solutions to the wave equation \eqref{we} in $\mathcal{J}^{+}(\{t=0\})$.
\end{enumerate}

\subsection{Relation to scattering theory}
\label{discussion}

An additional convenience of the fact that we can ``read off'' the late-time asymptotics from the initial data on $\{t=0\}$ is that we can evolve such data both to the future and to the past and hence obtain a correlation between $(\psi|_{\mathcal{H}^{-}},r\psi|_{\mathcal{I}^{-}})$, the induced $\psi$ and $r\psi$ on the past event horizon $\mathcal{H}^{-}$ and the past null infinity $\mathcal{I}^{-}$, respectively, and $(\psi|_{\mathcal{H}^{+}},r\psi|_{\mathcal{I}^{+}})$, the induced $\psi$ and $r\psi$ on the future event horizon $\mathcal{H}^{+}$ and the future null infinity $\mathcal{I}^{+}$, respectively. For convenience, we will restrict the discussion in this section to smooth compactly supported initial data $\left(\psi|_{\{t=0\}},\partial_t\psi_{\{t=0\}}\right)$ on $\{t=0\}$.

It is in fact possible to consider more generally the evolution of ``past scattering data'' $(\psi|_{\mathcal{H}^{-}},r\psi|_{\mathcal{I}^{-}})$ to ``future scattering data'' $(\psi|_{\mathcal{H}^{+}},r\psi|_{\mathcal{I}^{+}})$ and vice versa via a \emph{scattering map}, a bijection between suitable energy spaces on $\mathcal{I}^-\cup \mathcal{H}^-$ and $\mathcal{I}^+\cup \mathcal{H}^+$. We refer to \cite{DafShl2016,linearscattering} and the references therein for results pertaining to the scattering map. Note however that \emph{the evolution of such scattering data need not result in a solution $\psi$ with $\left(\psi|_{\{t=0\}},\partial_t\psi_{\{t=0\}}\right)$ smooth and compactly supported}. By imposing smoothness and compact support of $\left(\psi|_{\{t=0\}},\partial_t\psi_{\{t=0\}}\right)$, we are therefore restricting to \emph{special} scattering data from the point of view of the scattering map.

For convenience let us denote by 
\[I_0^{(1)}[\psi,\mathcal{I}^{+}], \  \ I_0^{(1)}[\psi,\mathcal{I}^{-}]\]
the TINP constants of $\psi$ on the future and past null infinity $\mathcal{I}^{+}, \mathcal{I}^{-}$, respectively.\footnote{So far we have only considered the future region and hence $I_{0}^{(1)}[\psi]$ has always been the TINP constant on $\mathcal{I}^{+}$.}

If we restrict to smooth compactly supported initial data $\left(\psi|_{\{t=0\}},\partial_t\psi_{\{t=0\}}\right)$ on $\{t=0\}$ then, in view of \eqref{ont0}, the coefficient of the leading-order future-asymptotic term\footnote{See, for instance, \eqref{eq:ours} and \eqref{eq:oursradiation}.} along both $\mathcal{H}^{+}$ and $\mathcal{I}^{+}$ is given by 
\[I_0^{(1)}[\psi,\mathcal{I}^{+}]=\frac{M}{4\pi}\int_{ S_{\text{BF}}}\!\!\psi \, r^2d\omega+\frac{M}{4\pi}\int_{\{t=0\}}\ \frac{1}{1-\frac{2M}{r}}\partial_t\psi\, r^2 dr d\omega.\]Hence, asymptotically along $\mathcal{I}^{+}$ as $\tau\rightarrow \infty$ (towards past timelike infinity)
\[r\psi|_{\mathcal{I}^{+}}(\tau, r=\infty,\theta,\varphi)\sim  -2I_{0}^{(1)}[\psi,\mathcal{I}^{+}]\cdot \frac{1}{\tau^2}.\]
Similarly, in view of the time symmetry of the Schwarzschild metric, we obtain that
 the coefficient of the leading-order past-asymptotic term along both $\mathcal{H}^{-}$ and $\mathcal{I}^{-}$ is given by 
\[I_0^{(1)}[\psi,\mathcal{I}^{-}]=\frac{M}{4\pi}\int_{ S_{\text{BF}}}\!\!\psi \,r^2 d\omega-\frac{M}{4\pi}\int_{\{t=0\}}\ \frac{1}{1-\frac{2M}{r}}\partial_t\psi\, r^2 dr d\omega.\]
Then, asymptotically along $\mathcal{I}^{-}$ as $\tau\rightarrow \infty$
\[r\psi|_{\mathcal{I}^{-}}(\tau, r=\infty,\theta,\varphi)\sim  -2I_{0}^{(1)}[\psi,\mathcal{I}^{-}]\cdot \frac{1}{\tau^2}.\]
Hence, we obtain
\begin{equation}
I_0^{(1)}[\psi,\mathcal{I}^{+}]=-I_0^{(1)}[\psi,\mathcal{I}^{-}]+\frac{M}{2\pi}\int_{ S_{\text{BF}}}\!\!\psi \, r^2d\omega.
\label{qfuturepast}
\end{equation}
	\begin{figure}[H]
\begin{center}
\includegraphics[width=7cm]{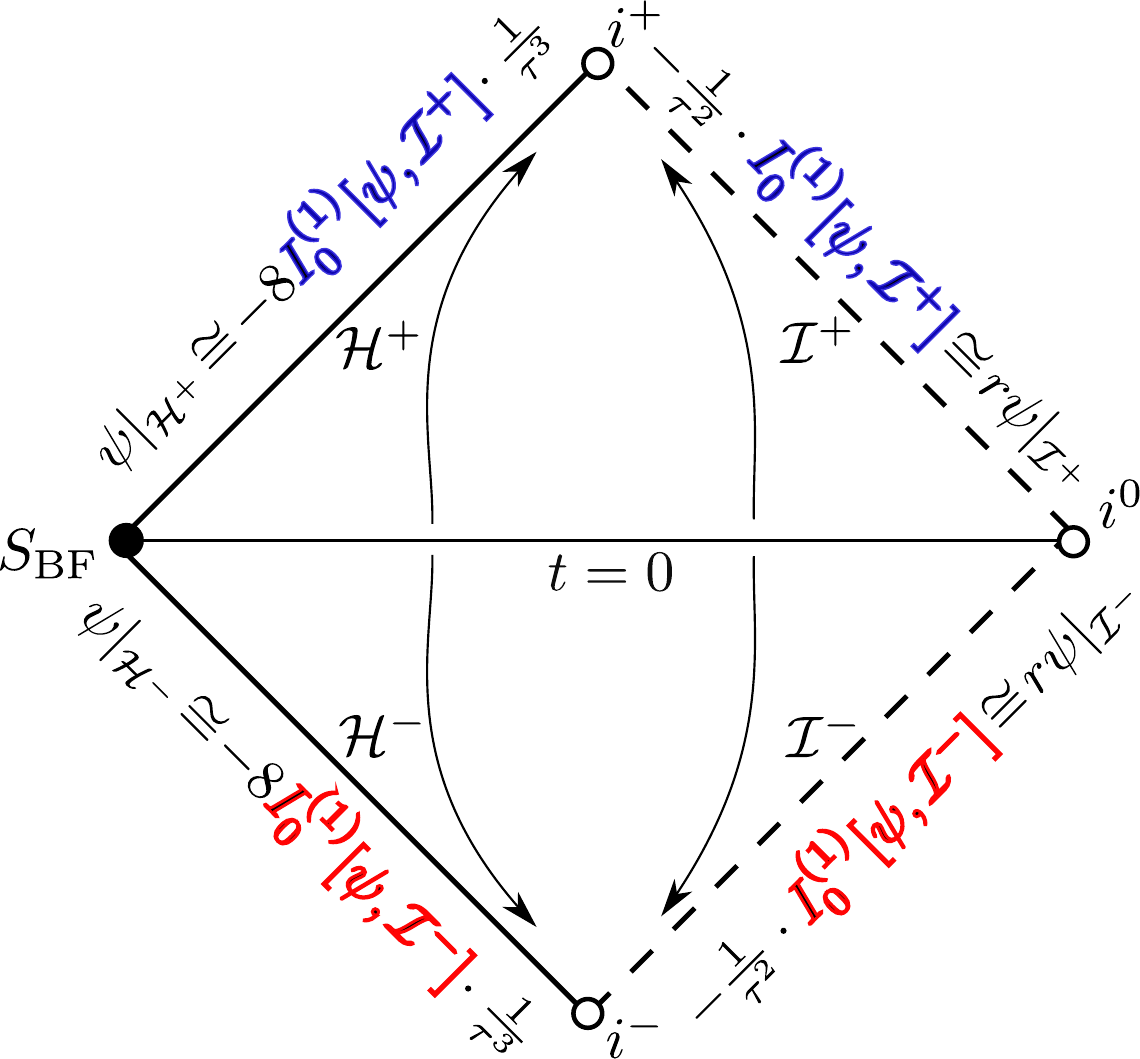}
\caption{\label{fig:9}The scattering map and the future and past TINP constants $I_{0}^{(1)}[\psi,\mathcal{I}^{+}]$, $I_{0}^{(1)}[\psi,\mathcal{I}^{-}]$.}
\end{center}
\end{figure}
\vspace{-0.9cm}
Hence, for this special class of solutions to the wave equation arising from past scattering data $(\psi|_{\mathcal{H}^{-}},r\psi|_{\mathcal{I}^{-}})$ \textbf{such that $\psi|_{t=0}$ and $n_{t=0}\psi|_{t=0}$ are smooth and compactly supported}, the (leading order) asymptotic behavior of $(\psi|_{\mathcal{H}^{+}},r\psi|_{\mathcal{I}^{+}})$ is determined by 
\begin{enumerate}
	\item the (leading order) asymptotic behavior of the past scattering data $(\psi|_{\mathcal{H}^{-}},r\psi|_{\mathcal{I}^{-}})$, and
	\item the spherical mean of $\psi$ on the bifurcation sphere $S_{\text{BF}}=\mathcal{H}^{+}\cap\mathcal{H}^{-}$. 
\end{enumerate}

We can moreover go beyond the leading order asymptotics and obtain a direct integral relation between the \emph{full} scattering data sets $(\psi|_{\mathcal{H}^{+}},r\psi|_{\mathcal{I}^{+}})$ and $(\psi|_{\mathcal{H}^{-}},r\psi|_{\mathcal{I}^{-}})$.

First, we note that we can express the coefficients $I_0^{(1)}[\psi,\mathcal{I}^{-}]$ and $I_0^{(1)}[\psi,\mathcal{I}^{+}]$ solely in terms of the past scattering data $(\psi|_{\mathcal{H}^{-}},r\psi|_{\mathcal{I}^{-}})$ or the future scattering data $(\psi|_{\mathcal{H}^{+}},r\psi|_{\mathcal{I}^{+}})$. Indeed, for $\psi$ arising from smooth, compactly supported data $\{t=0\}$, we have that
\begin{align}
\label{TINPfutnullinfinityint}
I_0^{(1)}[\psi,\mathcal{I}^{+}]=&\frac{M}{4\pi}\int_{-\infty}^{\infty} \int_{\s^2 }r\psi|_{\mathcal{I}^+}(u,\theta,\varphi)\,d\omega du,\\
\label{TINPpastnullinfinityint}
I_0^{(1)}[\psi,\mathcal{I}^{-}]=&\frac{M}{4\pi}\int_{-\infty}^{\infty} \int_{\s^2 }r\psi|_{\mathcal{I}^-}(v,\theta,\varphi)\,d\omega dv.
\end{align}
See also Section 1.6 of \cite{paper2}.\footnote{The above integrals moreover play an important role in \cite{luk2015}.}

By combining \eqref{qfuturepast} with \eqref{TINPfutnullinfinityint} and \eqref{TINPpastnullinfinityint}, we obtain the following relation between $(\psi|_{\mathcal{H}^{-}},r\psi|_{\mathcal{I}^{-}})$ and  $(\psi|_{\mathcal{H}^{+}},r\psi|_{\mathcal{I}^{+}})$ (with $\psi$ arising from smooth, compactly supported data $\{t=0\}$):
\begin{equation*}
\int_{-\infty}^{\infty} \int_{\s^2 }r\psi|_{\mathcal{I}^+}(u,\theta,\varphi)\,d\omega du=-\int_{-\infty}^{\infty} \int_{\s^2 }r\psi|_{\mathcal{I}^-}(v,\theta,\varphi)\,d\omega dv+2\int_{ S_{\text{BF}}}\!\!\psi \, r^2d\omega.
\end{equation*}

\subsection{The main theorems}
\label{mainresult}
We consider spherically symmetric black hole spacetimes $(\mathcal{M},g)$ as defined in Section \ref{geosetting} including in particular the Schwarzschild and the more general sub-extremal Reissner--Nordstr\"{o}m  family of black hole spacetimes. In particular, the coordinates $u,v,r,t$ are as defined in Section \ref{geosetting}.

The Newman--Penrose constant and the time-inverted Newman--Penrose (TINP) constant are defined in Section \ref{sec:ReviewOfTheTimeIntegralConstruction}.

 Consider a Cauchy hypersurface $\Sigma$ that crosses the (future or past)  event horizon $\mathcal{H}^{+}$  and terminates at (future or past)  null infinity $\mathcal{I}^{+}$. 
We define the truncated quantity:
\begin{equation}
\boxed{G(\Sigma^{\leq r_0})[\psi]=\int_{\Sigma\cap\mathcal{H}^{+}}\!\!\psi\,r^2d\omega+\int_{\Sigma\cap\{r\leq r_0\}} n_{\Sigma}(\psi)\,d\mu_{\Sigma}+\int_{\Sigma\cap\{r=r_0\}}\Big(\psi-\frac{2}{M}r\partial_v(r\psi)\Big)r^2d\omega}.
\label{trunc}
\end{equation}

The following theorem derives a geometric interpretation of the TINP constant on hyperboloidal slices to the future of the bifurcation sphere in terms of an appropriately modified gradient flux. 
\begin{theorem} \textbf{(The TINP constant as a modified gradient flux)}
Consider a Cauchy hypersurface $\Sigma$ that crosses the future event horizon $\mathcal{H}^{+}$ to the future of the bifurcation sphere and terminates at future null infinity $\mathcal{I}^{+}$. Let $\psi$ be a solution to the wave equation \eqref{we} with vanishing Newman--Penrose constant $I_{0}[\psi]=0$ such that in fact
\[\lim_{r\rightarrow \infty} r^3 \partial_v(r\psi)|_{\Sigma}<\infty,  \]
then the time-inverted Newman--Penrose constant $I_{0}^{(1)}[\psi]$ of $\psi$ is given by 
\begin{equation}
I_{0}^{(1)}[\psi]=\frac{M}{4\pi}\lim_{r_0\rightarrow \infty}G(\Sigma^{\leq r_0})[\psi],
\label{geomiequa}
\end{equation}
where $G(\Sigma^{\leq r_0})[\psi]$ is given by \eqref{trunc}. 
\label{mytheo1}
\end{theorem}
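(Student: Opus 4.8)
The plan is to transform the established expression \eqref{generali} for $I_0^{(1)}[\psi]$ directly into the modified gradient flux $G(\Sigma^{\leq r_0})[\psi]$. Since $I_0^{(1)}[\psi]$ depends only on $\psi$ and, by Lemma \ref{remarkderivation} together with the conservation law of Proposition \ref{prop2deri}, the limit $\lim_{r_0\to\infty}G(\Sigma^{\leq r_0})[\psi]$ is independent of the choice of $\Sigma$, it suffices to verify \eqref{geomiequa} for a single conveniently chosen hypersurface; I would take $\Sigma$ to be spacelike of the form $t=f(\rho)$ for $\rho=r|_\Sigma\le R$ and outgoing null ($v=\mathrm{const}$, so $h_\Sigma=0$) for $\rho\ge R$, which is exactly the configuration for which \eqref{generali} is stated. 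The first step is geometric: for the spacelike portion I compute the future unit normal $n_\Sigma$ and the induced volume form. Writing $A:=D^{-1}-D(f')^2$ for the coefficient of $d\rho^2$ in the induced metric, the volume form is $\sqrt{A}\,r^2\,d\rho\,d\omega$ while $n_\Sigma=A^{-1/2}(D^{-1}\partial_t+f'D\partial_r)$, so the two factors of $\sqrt{A}$ cancel in $n_\Sigma(\psi)\,d\mathrm{vol}_\Sigma$. Using $h_\Sigma=f'+D^{-1}$ (equivalently the defining relation $\partial_\rho=-2D^{-1}\partial_u+h_\Sigma T$) one finds $A=h_\Sigma(2-Dh_\Sigma)$ and the clean identity
\[ \int_{\Sigma\cap\{r\le r_0\}} n_\Sigma(\psi)=\int\!\!\int_{\mathbb{S}^2}\Big(h_\Sigma(2-Dh_\Sigma)\,r^2\,T\psi+(Dh_\Sigma-1)\,r^2\,\partial_\rho\psi\Big)\,d\omega\,d\rho. \]

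The second step is an integration by parts in $\rho$. Comparing this integrand with the bulk integrand of \eqref{generali}, the coefficients of $T\psi$ already agree exactly (both equal $Mh_\Sigma(2-Dh_\Sigma)r^2$), while the coefficients of $\partial_\rho\psi$ differ and \eqref{generali} additionally carries terms proportional to $\psi$. The key algebraic observation is that the entire discrepancy is a total $\rho$-derivative: one checks that
\[ \partial_\rho\Big[M(1-Dh_\Sigma)r^2\psi\Big]=M(1-Dh_\Sigma)r^2\partial_\rho\psi+M\big(2r(1-Dh_\Sigma)-(Dh_\Sigma)'r^2\big)\psi, \]
which is precisely $M$ times the difference between the $\partial_\rho\psi$- and $\psi$-terms of \eqref{generali} and those of the gradient flux. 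Integrating this total derivative from the horizon $\rho=2M$ (where $D=0$, hence $1-Dh_\Sigma=1$ and the evaluated integrand is $M(2M)^2\psi$) to $\rho=r_0$ shows that the horizon boundary term is exactly cancelled by the term $M\int_{\Sigma\cap\mathcal{H}^+}\psi$ appearing in $G$, while an outer boundary contribution $M\int_{\Sigma\cap\{r=r_0\}}(2-Dh_\Sigma)\psi\,r^2\,d\omega$ is generated at $r=r_0$.

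The final and hardest step is the analysis at null infinity. After the above reductions, matching $M\,G$ with \eqref{generali} amounts to an identity between the remaining $r=r_0$ boundary terms of $G$ and the outer-region and leading terms of \eqref{generali}, in the limit $r_0\to\infty$. Here the modification $-\tfrac{2}{M}r\partial_v(r\psi)$ built into $G$ is essential: multiplied by $M$ and integrated against the sphere area $r_0^2\,d\omega$ it produces $-\int_{\mathbb{S}^2}2r_0^3\,\partial_v(r\psi)\,d\omega$, which, using $\partial_r=\tfrac{2}{D}\partial_v$ and $D\to 1$, is exactly the leading term $-\lim_{r_0}\int r^3\partial_r(r\psi)\,d\omega$ of \eqref{generali}; thus the modification renormalizes the otherwise divergent gradient flux. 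It then remains to show that the leftover boundary terms $M\int(2-Dh_\Sigma)\psi\,r_0^2\,d\omega$ together with the null-portion ($\rho\ge R$) contribution reproduce the outer terms $M\int_{r=R}r^2(2-Dh_\Sigma)\psi+M\int_{r\ge R}rD\,\partial_r(r\psi)$ of \eqref{generali} as $r_0\to\infty$.

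This is where the two hypotheses enter: $I_0[\psi]=0$ guarantees that $r^2\partial_v(r\psi)\to 0$, so that no divergent residue survives the renormalization, and $\lim_{r\to\infty}r^3\partial_v(r\psi)|_\Sigma<\infty$ guarantees both that the limit exists and that the sub-leading expansion of $r\psi$ is controlled well enough to identify the finite remainder. I expect the careful asymptotic bookkeeping in this outer region---correctly treating the spacelike-versus-null portions and the cross terms between the $(2-Dh_\Sigma)\psi$ and $\partial_v(r\psi)$ contributions---to be the main obstacle; on the null portion it is cleanest to interpret the flux via the $3$-form $\star\,d\psi$, whose finiteness is provided by Lemma \ref{remarkderivation}, and to invoke Proposition \ref{prop2deri} to pass from this reference slice to a general hyperboloidal $\Sigma$ terminating at $\mathcal{I}^+$.
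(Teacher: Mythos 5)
Your proposal is correct and follows essentially the same route as the paper's proof: both transform the formula \eqref{generali} into the modified gradient flux by computing $n_{\Sigma}$ and the induced volume form on the spacelike piece (yielding exactly the integrand $h_{\Sigma}(2-Dh_{\Sigma})r^2\,T\psi+(Dh_{\Sigma}-1)r^2\,\partial_{\rho}\psi$), absorbing the discrepancy with \eqref{generali} into a total $\rho$-derivative whose horizon boundary term survives precisely because $D|_{\mathcal{H}^{+}}=0$, and identifying the counterterm $-\tfrac{2}{M}r\partial_v(r\psi)$ with the leading term $-r^3\partial_r(r\psi)$ of \eqref{generali} via $\partial_r=\tfrac{2}{D}\partial_v$. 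The only real difference is cosmetic: you invoke the conservation law of Proposition \ref{prop2deri} to reduce to a single reference slice, whereas the paper performs the computation directly for a spacelike-null hypersurface and notes that it is identical for general hypersurfaces crossing $\mathcal{H}^{+}$ and terminating at $\mathcal{I}^{+}$.
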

 Theorem \ref{mytheo1} is proved in Section \ref{sec:AGeometricInterpretationOfTheTimeInvertedNewmanPenroseConstant}.

 The original definition of the TINP constant breaks down
for Cauchy hypersurfaces emanating from the bifurcation sphere since the time-
integral construction is singular for smooth initial data with non-trivial support on
the bifurcation sphere. The next theorem derives a generalized conservation law for the modified gradient fluxes which allows us to extend the validity of the TINP constant to more general Cauchy hypersurfaces. 

\begin{theorem} \textbf{(Conservation law for the TINP constant)}
\label{thm:consvlaw}

Consider two arbitrary hypersurfaces $\Sigma_i, i=1,2$ which cross the (future or past) event horizon\footnote{The hypersurfaces $\Sigma_i$ are allowed to intersect the bifurcation sphere.} and terminate at future null infinity.

Let $\psi$ be a solution to the wave equation \eqref{we} with vanishing Newman--Penrose constant $I_{0}[\psi]=0$ such that in fact
\[ \lim_{r\rightarrow \infty} r^3 |\partial_{v}(r\psi)|_{\Sigma_1}<\infty,\ \ \ \ \lim_{r\rightarrow \infty} r^3 |\partial_{v}(r\psi)|_{\Sigma_2}<\infty.  \] 
 Then, 
\begin{equation}
\lim_{r_0\rightarrow \infty}G(\Sigma_1^{\leq r_0})[\psi]=\lim_{r_0\rightarrow \infty}G(\Sigma_2^{\leq r_0})[\psi],
\label{geomiequaconb}
\end{equation}
where $G$ is given by \eqref{trunc}. 
In other words, the expression $\lim_{r_0\rightarrow \infty}G(\Sigma^{\leq r_0})[\psi]$ is \textbf{independent} of the choice of hypersurface $\Sigma$.
\label{mytheo2}
\end{theorem}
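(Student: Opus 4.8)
The plan is to prove the conservation law directly from the divergence theorem applied to the divergence-free current $\grad\psi$, bypassing the time-integral construction entirely; this is precisely what is needed, since one of the $\Sigma_i$ may pass through the bifurcation sphere, where the time integral is singular. Since $\Box_g\psi=0$ we have $\Div(\grad\psi)=0$, and because only the spherical mean $\psi_0$ survives the angular integration in each term of $G(\Sigma^{\leq r_0})[\psi]$ in \eqref{trunc}, I would reduce at the outset to $\phi:=r\psi_0$, which obeys the reduced wave equation that in these coordinates reads $\partial_u\partial_v\phi=-\frac{MD}{2r^3}\phi$. Assuming without loss of generality that $\Sigma_2\subset\mathcal{J}^{+}(\Sigma_1)$ (otherwise I would compare both slices to a common reference slice), I would apply the divergence theorem to $\grad\psi$ on the truncated region $\mathcal{R}_{r_0}=\mathcal{J}^{+}(\Sigma_1)\cap\mathcal{J}^{-}(\Sigma_2)\cap\{r\le r_0\}$, whose boundary is $\Sigma_1^{\leq r_0}$, $\Sigma_2^{\leq r_0}$, the segment of $\mathcal{H}^{+}$ between the two slices, and the segment of the timelike cylinder $\{r=r_0\}$ between them. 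The vanishing of the bulk integral then reads, with the correct Lorentzian signs, $F_{\Sigma_2}-F_{\Sigma_1}=F_{\mathcal{H}}+F_{\mathrm{cyl}}$, where $F_{\Sigma_i}=\int_{\Sigma_i^{\leq r_0}}n_{\Sigma_i}(\psi)$ are exactly the gradient fluxes appearing as the middle term of $G$.

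Next I would identify the two remaining boundary fluxes with the other two terms of $G$. On the horizon I would pass to regular Kruskal-type coordinates in which $\mathcal{H}^{+}=\{U=0\}$ is generated by $\partial_V$; there the flux of $\grad\psi$ through a null segment is proportional to $\int r^2\partial_V\psi\,dV\,d\omega$, and since $r\equiv 2M$ is constant, the fundamental theorem of calculus along each generator telescopes this to $\pm\big(\int_{\Sigma_2\cap\mathcal{H}^{+}}\psi-\int_{\Sigma_1\cap\mathcal{H}^{+}}\psi\big)$, reproducing precisely the horizon term of \eqref{trunc}. The crucial point, and the reason the argument survives the bifurcation sphere, is that this uses only the smooth field $\partial_V\psi$ and never the stationary Killing field $T$, which degenerates at $S_{\text{BF}}$; when $\Sigma_2$ passes through $S_{\text{BF}}$ the lower endpoint of the telescoping is simply $V=0$, giving $\int_{S_{\text{BF}}}\psi$.

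For the cylinder I would show that $F_{\mathrm{cyl}}$ combines with the boundary correction term of $G$ to vanish in the limit. Writing $T_3^{(i)}=\int_{\Sigma_i\cap\{r=r_0\}}\big(\psi-\frac{2}{M}r\partial_v(r\psi)\big)$ and using the fundamental theorem of calculus in $t$ along $\{r=r_0\}$, one has $T_3^{(2)}-T_3^{(1)}=\int_{\{r=r_0\}\cap\mathcal{R}}\partial_t\big(r\phi-\frac{2}{M}r^3\partial_v\phi\big)\,dt\,d\omega$. Expanding $\partial_t=\partial_u+\partial_v$ and substituting the reduced wave equation into $-\frac{2}{M}r^3\partial_u\partial_v\phi=D\phi$, a direct computation shows that the integrand of $F_{\mathrm{cyl}}+(T_3^{(2)}-T_3^{(1)})$ collapses to $2r\partial_v\phi-\frac{2}{M}r^3\partial_v^2\phi$; here the precise coefficient $\frac{2}{M}$ in \eqref{trunc} is exactly what forces the $\partial_u\phi$ and $D\phi$ contributions to cancel against the potential. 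After the horizon cancellation of the previous paragraph this assembles into
\[
G(\Sigma_2^{\leq r_0})[\psi]-G(\Sigma_1^{\leq r_0})[\psi]=\int_{\{r=r_0\}\cap\mathcal{R}}\Big(2r\partial_v\phi-\tfrac{2}{M}r^3\partial_v^2\phi\Big)\,dt\,d\omega,
\]
and it remains only to send $r_0\to\infty$.

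The main obstacle is this final asymptotic analysis at $\mathcal{I}^{+}$: individually $F_{\Sigma_i}$ and $T_3^{(i)}$ diverge as $r_0\to\infty$, and everything hinges on the exact cancellation recorded above. To control the leftover cylinder integrand I would first propagate the data hypotheses $I_{0}[\psi]=0$ and $\lim_{r\to\infty}r^3\partial_v(r\psi)|_{\Sigma_i}<\infty$ off the initial slices into the bulk using the $r^p$-weighted hierarchy of \cite{paper2}, obtaining on $\{r=r_0\}$ that $r^2\partial_v\phi\to 0$ (from $I_{0}=0$, so $2r\partial_v\phi\to 0$) and that $r^3\partial_v\phi$ is bounded with $r^3\partial_v^2\phi\to 0$ (from the $r^3$-condition together with the asymptotic expansion of $\phi$ near $\mathcal{I}^{+}$). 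Since the retarded-time interval swept out by $\{r=r_0\}\cap\mathcal{R}$ stays bounded as $r_0\to\infty$, dominated convergence then forces the leftover integral to $0$, yielding the claimed identity $\lim_{r_0\to\infty}G(\Sigma_1^{\leq r_0})[\psi]=\lim_{r_0\to\infty}G(\Sigma_2^{\leq r_0})[\psi]$, with the existence of each limit guaranteed by Lemma \ref{remarkderivation}. The two technical points requiring care are the Lorentzian sign conventions in the divergence theorem across a boundary of mixed null, timelike, and spacelike character, and the verification that the propagated decay is uniform on the cylinder rather than merely on the initial slices.
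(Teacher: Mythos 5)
Your strategy coincides with the paper's in all essentials: apply the divergence theorem to the gradient current $\nabla\psi$ in the region between the two slices, telescope the horizon flux into the difference of the $\int_{\Sigma\cap\mathcal{H}^{+}}\psi$ terms, and use the reduced wave equation $\partial_u\partial_v\phi=-\tfrac{1}{4r}DD'\,\phi$ to convert the leftover $D\phi$ contribution into the total derivative of $-\tfrac{2}{M}r^3\partial_v\phi$, which is precisely the boundary correction built into \eqref{trunc}. Two differences are worth recording. First, you truncate along the timelike cylinder $\{r=r_0\}$, whereas the paper truncates along the outgoing null cone $\{v=v_{\infty}\}$ and integrates by parts in $u$ along it; the paper's choice is more economical because its error term \eqref{preliconse} involves only $\phi$ and $r^2\partial_v\phi$, both controlled directly by the stated hypotheses, while your cylinder produces the second-order term $\tfrac{2}{M}r^3\partial_v^2\phi$ through $\partial_t\partial_v\phi=\partial_u\partial_v\phi+\partial_v^2\phi$. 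Your claim that $r^3\partial_v^2\phi\to 0$ is true but does not follow from the hypotheses as stated, which only constrain first derivatives on the initial slices: you need either a differentiated asymptotic expansion $\partial_v\phi=L(u)r^{-3}+o(r^{-3})$ obtained by commuting the equation, or --- more cheaply --- one further integration by parts in $t$ using $\partial_v^2\phi=\partial_t\partial_v\phi-A\phi$, which reduces your leftover to the boundary difference $\tfrac{2}{M}\big[r^3\partial_v\phi\big]$ plus $\int D\phi\,dt\,d\omega$; neither piece vanishes individually (since $\partial_u(r^3\partial_v\phi)\to-\tfrac{M}{2}\lim_{r\to\infty}(r\psi)$ along $\mathcal{I}^{+}$), but their sum cancels in the limit. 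This is a real, if repairable, soft spot in your write-up. Second, your use of a regular Kruskal generator $\partial_V$ on the horizon handles the bifurcation-sphere case uniformly within the main argument, whereas the paper uses $T$ on the horizon in Proposition \ref{prop2deri} and treats slices through $S_{\text{BF}}$ separately in Proposition \ref{bifuprop} with a regular null normal $V$; your unified treatment is a genuine minor streamlining.
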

Theorem \ref{thm:consvlaw} is proved in Section \ref{sec:ConservationLawForTheModifiedGradientFluxes}.

The final theorem obtains an explicit expression for the TINP constant in terms of the initial data on the $\{t=0\}$ hypersurface (which emanates from the bifurcation sphere). 

\begin{theorem} \textbf{(The TINP constant on $\{t=0\}$)}
Let $\psi$ be a solution to the wave equation \eqref{we} with smooth, compactly supported initial data on $\{t=0\}$. Then the time-inverted Newman--Penrose constant $I_0^{(1)}[\psi]$ of $\psi$ is given by
\[I_0^{(1)}[\psi]= \frac{M}{4\pi}\int_{ S_{\text{BF}}}\!\!\psi \, r^2d\omega+\frac{M}{4\pi}\int_{\{t=0\}}\ \frac{1}{1-\frac{2M}{r}}\partial_t\psi\, r^2 dr d\omega.\]
\label{mytheo3}
\end{theorem}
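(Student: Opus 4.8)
The plan is to obtain Theorem~\ref{mytheo3} by transporting the modified gradient flux from an auxiliary slice onto the hypersurface $\{t=0\}$ and then evaluating it there explicitly; the conservation law of Theorem~\ref{mytheo2} does all the analytic work, so the only new content is a term-by-term computation of $G$ on $\{t=0\}$. Before anything else I would check that a solution $\psi$ arising from smooth, compactly supported initial data on $\{t=0\}$ meets the hypotheses of Theorems~\ref{mytheo1} and~\ref{mytheo2}. The Newman--Penrose constant vanishes, $I_0[\psi]=0$: as recorded in Section~\ref{intro}, compact support together with the conservation law along $\mathcal{I}^+$ forces the data induced on any later slice to have vanishing NP constant. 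The decay requirement $\lim_{r\to\infty} r^3|\partial_v(r\psi)|_{\{t=0\}}<\infty$ is immediate, since $\partial_v(r\psi)$ vanishes identically for $r$ beyond the support of the data.

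Next I would evolve the data forward to an auxiliary Cauchy hypersurface $\Sigma_0$ that crosses $\mathcal{H}^+$ strictly to the future of the bifurcation sphere and terminates at $\mathcal{I}^+$, along which the induced data again has vanishing NP constant and satisfies \eqref{frompaper2}. Theorem~\ref{mytheo1} then gives $I_0^{(1)}[\psi]=\frac{M}{4\pi}\lim_{r_0\to\infty}G(\Sigma_0^{\leq r_0})[\psi]$. Because Theorem~\ref{mytheo2} explicitly allows hypersurfaces meeting the bifurcation sphere, I may apply its conservation law to the pair $(\Sigma_0,\{t=0\})$ to deduce
\begin{equation*}
I_0^{(1)}[\psi]=\frac{M}{4\pi}\lim_{r_0\to\infty}G(\{t=0\}^{\leq r_0})[\psi].
\end{equation*}
This reduces the theorem to evaluating the three pieces of $G$ in \eqref{trunc} on $\{t=0\}$.

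For the evaluation I would treat each term in turn. The horizon contribution $\int_{\{t=0\}\cap\mathcal{H}^+}\psi$ is an integral over the sphere $\{t=0\}\cap\{r=2M\}=S_{\text{BF}}$ with area element $r^2\,d\omega$, so it equals $\int_{S_{\text{BF}}}\psi\,r^2\,d\omega$. For the bulk flux I would insert the unit normal $n_{\{t=0\}}=(1-\tfrac{2M}{r})^{-1/2}\partial_t$ from \eqref{nt0} together with the induced volume form $(1-\tfrac{2M}{r})^{-1/2}\,r^2\,dr\,d\omega$ determined by the induced metric $(1-\tfrac{2M}{r})^{-1}dr^2+r^2\,d\omega^2$; the two inverse square roots combine to give $\int(1-\tfrac{2M}{r})^{-1}\partial_t\psi\,r^2\,dr\,d\omega$, which is exactly the second claimed term and is finite because $n_{\{t=0\}}\psi$ is a smooth function on the slice. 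Finally, for $r_0$ larger than the radius of the support, compact support of the initial pair makes $\psi$ and $n_{\{t=0\}}\psi$, hence all first derivatives of $\psi$, vanish on $\{t=0\}\cap\{r=r_0\}$, so the boundary term $\int_{\{t=0\}\cap\{r=r_0\}}(\psi-\frac{2}{M}r\partial_v(r\psi))$ is zero in the limit. Collecting the surviving terms and multiplying by $\frac{M}{4\pi}$ produces the asserted formula.

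The principal point is conceptual rather than computational: the bifurcation-sphere term $\frac{M}{4\pi}\int_{S_{\text{BF}}}\psi\,r^2\,d\omega$ is precisely the contribution that a naive substitution of $h_{\{t=0\}}=(1-\tfrac{2M}{r})^{-1}$ into \eqref{generali} would omit, and it is recovered here only because $G$ carries the explicit horizon term $\int_{\Sigma\cap\mathcal{H}^+}\psi$ which, on $\{t=0\}$, collapses onto $S_{\text{BF}}$. The care I would take is therefore to make sure the hypotheses of Theorems~\ref{mytheo1} and~\ref{mytheo2} genuinely hold for compactly supported data on a slice through the bifurcation sphere, so that the conservation law may be invoked legitimately across a hypersurface on which the stationary Killing field $T$ degenerates---this being exactly the regime in which the original time-integral definition of $I_0^{(1)}[\psi]$ breaks down.
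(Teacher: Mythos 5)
Your proposal follows the same route as the paper: express $I_0^{(1)}[\psi]$ as the modified gradient flux via Theorem \ref{mytheo1}, transport it with the conservation law of Theorem \ref{mytheo2}, and then evaluate the three pieces of $G$ in \eqref{trunc} on the initial slice using \eqref{nt0}; your term-by-term evaluation (horizon term collapsing to $S_{\text{BF}}$, the two factors of $(1-\tfrac{2M}{r})^{-1/2}$ combining in the bulk term, the $r=r_0$ boundary term dying by compact support) is exactly what the paper does.

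The one point where your argument needs a small repair is the application of Theorem \ref{thm:consvlaw} \emph{directly} to the pair $(\Sigma_0,\{t=0\})$. The hypersurface $\{t=0\}$ terminates at spatial infinity, not at future null infinity, and it lies outside the class of hypersurfaces fixed in Section \ref{geosetting}: there one requires $0\leq 2-h_{\Sigma}D=O(r^{-1-\epsilon})$, whereas $h_{\{t=0\}}=D^{-1}$ gives $2-h_{\Sigma}D\equiv 1$. Correspondingly, in the proof of Proposition \ref{prop2deri} the truncating hypersurface $\{v=v_\infty\}$ would meet $\{t=0\}$ only after an infinite range of $u$, so the error term $\mathcal{E}^{\Sigma_1,\Sigma_2}_{v_\infty}[\psi]$ is not controlled by the stated hypotheses alone. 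The paper avoids this by applying the conservation law to the hybrid hypersurface $\Sigma=\big(\{t=0\}\cap\{r\leq R\}\big)\cup\big(\{u=u_0\}\cap\{r\geq R\}\big)$, which does terminate at $\mathcal{I}^{+}$; since the data are supported in $\{r\leq R\}$, the solution vanishes on the null portion by domain of dependence, so only the $\{t=0\}\cap\{r\leq R\}$ contribution and the $S_{\text{BF}}$ term survive, yielding the same formula you obtain. Your conclusion is therefore correct, but the intermediate invocation of Theorem \ref{thm:consvlaw} should be routed through such a glued hypersurface (or accompanied by a separate verification that the error term vanishes for $\{t=0\}$, which for compactly supported data is straightforward).
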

Theorem \ref{mytheo3} is proved in Section \ref{sec:TheTINPConstantOnT0}.

\subsection{Acknowledgements}
\label{Acknowledgements}

We would like to thank Professor Avy Soffer for suggesting the problem and for several insightful discussions. The second author (S.A) would like to thank the Central China Normal University for the hospitality in the period August 5--16, 2017 where part of this work was completed. The second author (S.A.) acknowledges support through NSF grant DMS-1265538, NSERC grant 502581, an Alfred P. Sloan Fellowship in Mathematics and the Connaught Fellowship 503071.

\section{The geometric setting}
\label{geosetting}

We consider stationary, spherically symmetric and asymptotically flat  spacetimes $(\mathcal{M},g)$ as in Section 2.1 of \cite{paper2}. These include in particular the Schwarzschild family and the larger sub-extremal Reissner--Nordstr\"{o}m family of black holes as special cases. In this section we briefly recall the geometric assumptions on the spacetime metrics and introduce the notation that we use in this paper. 
 
The manifold $\mathcal{M}$ is (partially) covered by appropriate double null coordinates $(u,v,\theta,\varphi)$ with respect to which the metric takes the form 
\begin{equation*}
g=-D(r)dudv+r^2(d\theta^2+\sin^2 \theta d\varphi^2),
\end{equation*}
with $D(r)$ a smooth function such that
\begin{equation}
D(r)=1-\frac{2M}{r}+\frac{d_1}{r^2}+O_3(r^{-2-\beta}),
\label{dbehavior}
\end{equation}
where $d_1\in\mathbb{R}$ and $\beta>0$.\footnote{We use here the standard big O notation to indicate terms that can be uniformly bounded by $r^{-2-\beta}$, and moreover, their $k$-th derivatives, with $k\leq 3$, can be uniformly bounded by $r^{-2-\beta-k}$.} Here $r=r(u,v)$ denotes the area-radius of the spheres $S_{u,v}$ of symmetry. The sub-extremal Reissner-Nordstr\"{o}m is a special case with $D(r)=1-\frac{2M}{r}+\frac{e^2}{r^2}$ with $|e|<M$. We assume that   $D(r_{\text{min}})=0$ and that $\left.\frac{dD(r)}{dr}\right|_{r=r_{\text{min}}}\neq 0$ for some $r_{\text{min}}>0$ and that $D(r)>0$ for $r>r_{\text{min}}$. 

The boundary hypersurface $\{r=r_{\text{min}}\}\cap\{v<\infty \}\cap\{u=\infty\}$ is called the future event horizon and is denoted by $\mathcal{H}^{+}$ whereas the boundary hypersurface $\{r=r_{\text{min}}\}\cap\{u<\infty \}\cap\{v=-\infty\}$ is called the past event horizon and is denoted by $\mathcal{H}^{-}$. The future event horizon and the past event horizon intersect at the bifurcation sphere $S_{\text{BF}}$. 

The null hypersurfaces $C_{u_{0}}=\{u=u_0\}$ terminate in the future (as $r,v\rightarrow +\infty$) at future null infinity  $\mathcal{I}^{+}$. Note also that $u$ is a ``time'' parameter along future null infinity $\mathcal{I}^{+}$ such that $u$ increases towards the future. Similarly, the null hypersurfaces $\underline{C}_{v_{0}}=\{v=v_0\}$ terminate in the future (as $u\rightarrow -\infty$) at past null infinity  $\mathcal{I}^{-}$. Note also that $v$ is a ``time'' parameter along the past null infinity $\mathcal{I}^{-}$ such that $v$ increases towards the future. 

Furthermore, by an appropriate normalization (see, for instance, \cite{paper2}) we can assume that 
\begin{equation}
v-u=2r^*,
\label{uvr}
\end{equation} 
where the function $r^*=r^{*}(r)$ is given by
\[ r^*=R+\int_{R}^rD^{-1}(r')dr'.\]
Here $R>0$ is a sufficiently large but fixed constant. For such coordinates, we define the time function $t$ as follows:
\[v=t+r^*,\ \ u=t-r^*.\]
We also define the (stationary) vector field 
\[T=\partial_u+\partial_v.\]
Note that $T=\partial_t$ with respect to the $(t,r,\theta,\varphi)$ coordinate system. 

Finally, we will consider hypersurfaces $\Sigma$ which terminate at future null infinity. We consider the induced coordinate system $(\rho,\theta,\varphi)$ on $\Sigma$, where $\rho=r|_{\Sigma_0}$ and we define the function $h_{\Sigma}$ on $\Sigma$ such that the tangent vector field $\partial_{\rho}$ to $\Sigma$ satisfies the equation
\begin{align*}
\partial_{\rho}=-2D^{-1}\partial_u+h_{\Sigma}T,
\end{align*}
where $h_{\Sigma}$ is a positive function on $\Sigma$, such that $0\leq 2-h_{\Sigma}(r)D(r)=O(r^{-1-\epsilon})$, for some (arbitrarily small) $\epsilon>0$. It will be convenient to employ also the following alternative form of $\partial_{\rho}$:
\begin{equation*}
\partial_v^{\Sigma}=\frac{1}{h_{\Sigma}}\partial_{\rho}=\partial_v-f\partial_u,
\end{equation*}
with $f(r)=\frac{1}{h_{\Sigma}}(2-h_{\Sigma}D)$. Note that we can analogously define hyperboloidal hypersurfaces terminating at past null infinity. Finally, we will denote with $n_{\Sigma}$ the normal vector field to $\Sigma$ and with $d\mu_{\Sigma}$ the standard volume form corresponding to the induced metric on $\Sigma$. For more details regarding hypersurfaces and foliations, see Section 2.2 in \cite{paper2}.

\section{The TINP constant as a modified gradient flux}
\label{sec:AGeometricInterpretationOfTheTimeInvertedNewmanPenroseConstant}

Let $\Sigma$ be a Cauchy hypersurface which terminates at null infinity. Then the flux of the gradient vector field $\nabla \psi$ through $\Sigma$ is generically infinite:
First note that generically the following limits are infinite:
\[\int_{\Sigma}\nabla\psi\cdot n_{\Sigma}\,d\mu_{\Sigma}=\int_{\Sigma} n_{\Sigma}(\psi)\,d\mu_{\Sigma}=\infty,\]
where $n_{\Sigma}$ denotes the normal to $\Sigma$.\footnote{This follows from the fact that $\psi$ arising from smooth and compactly supported data will generically satisfy $\lim_{v\to \infty}v^3\partial_v(r\psi)|_{u=u'}\neq 0$ and $\lim_{v\to \infty}r\psi|_{u=u'} \neq 0$ for suitably large $u'$; see \cite{paper2}.} Furthermore, we clearly have that generically
\[\lim_{r_0\rightarrow \infty}\int_{\Sigma\cap\{r=r_0\}}\psi\,r^2d\omega=\lim_{r_0\rightarrow \infty}\int_{\Sigma\cap\{r=r_0\}}\psi r^2\,d\omega=\infty,\]
where we denote $d\omega=\sin\theta d\theta d\varphi$.

The following lemma shows that a combination of the above unbounded quantities is in fact bounded. 
\begin{lemma}
Consider a Cauchy hypersurface $\Sigma$ that crosses the future event horizon $\mathcal{H}^{+}$ and terminates at future null infinity $\mathcal{I}^{+}$. We denote by $n_{\Sigma}$ the normal vector field to $\Sigma$. Let $\psi$ be a solution to the wave equation \eqref{we} with vanishing Newman--Penrose constant $I_{0}[\psi]=0$ such that in fact
\[\lim_{r\rightarrow \infty} r^3 \partial_v(r\psi)|_{\Sigma}<\infty.  \]
Then the following limit 
\[\lim_{r_0\rightarrow \infty}\left(\int_{\Sigma\cap\{r\leq r_0\}} n_{\Sigma}(\psi)\,d\mu_{\Sigma}+\int_{\Sigma\cap\{r=r_0\}}\psi\,r^2d\omega\right)\]
exists and is finite. 
\label{remarkderivation}
\end{lemma}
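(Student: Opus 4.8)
The plan is to evaluate both integrals explicitly in the induced coordinate system $(\rho,\omega)$ on $\Sigma$ (with $\rho=r|_\Sigma$ and $d\omega=\sin\theta\,d\theta\,d\varphi$), and to show that each of the two integrals diverges linearly in $r_0$ with equal and opposite leading coefficients, so that their sum converges. First I would determine the unit normal and the induced volume form. Writing $n_\Sigma=a\partial_u+b\partial_v$ and imposing $g(n_\Sigma,\partial_\rho)=0$ together with $g(n_\Sigma,n_\Sigma)=-1$, and using $\partial_\rho=(-2D^{-1}+h_\Sigma)\partial_u+h_\Sigma\partial_v$ and $g(\partial_\rho,\partial_\rho)=h_\Sigma(2-h_\Sigma D)$, one finds
\[ b=\sqrt{\tfrac{h_\Sigma}{2-h_\Sigma D}},\qquad a=\tfrac{1}{D}\sqrt{\tfrac{2-h_\Sigma D}{h_\Sigma}},\qquad dV_\Sigma=\sqrt{h_\Sigma(2-h_\Sigma D)}\,r^2\,d\rho\,d\omega. \]

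The crucial simplification is that the square-root factors cancel in the flux: since $b\sqrt{h_\Sigma(2-h_\Sigma D)}=h_\Sigma$ and $a\sqrt{h_\Sigma(2-h_\Sigma D)}=\tfrac{2}{D}-h_\Sigma$, the flux density reduces to the clean expression
\[ n_\Sigma(\psi)\,dV_\Sigma=\Big[h_\Sigma\,\partial_v\psi+\big(\tfrac{2}{D}-h_\Sigma\big)\partial_u\psi\Big]\,r^2\,d\rho\,d\omega. \]
Setting $w=r\psi$ and substituting $\partial_v\psi=\tfrac1r\partial_v w-\tfrac{D}{2r}\psi$ and $\partial_u\psi=\tfrac1r\partial_u w+\tfrac{D}{2r}\psi$, the $r^2\psi$ contributions combine to $(1-h_\Sigma D)w$, so the flux integrand becomes $h_\Sigma r\partial_v w+\tfrac{2-h_\Sigma D}{D}r\partial_u w+(1-h_\Sigma D)w$. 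Because $2-h_\Sigma D=O(r^{-1-\epsilon})$, the only non-integrable piece as $r\to\infty$ is $(1-h_\Sigma D)w\sim -w$, whose $\rho$-integral grows linearly in $r_0$.

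Next I would cancel this divergence against the boundary term $\int_{\Sigma\cap\{r=r_0\}}\psi=\int_{\s^2}r_0\,w|_{r_0}\,d\omega$. Integrating by parts in $\rho$ gives $r_0 w|_{r_0}=\int^{r_0}(w+\rho\,\partial_\rho w)\,d\rho+\mathrm{const}$, and substituting $\partial_\rho w=h_\Sigma\partial_v w-(2-h_\Sigma D)\partial_u w$ the combined integrand of $F(r_0):=\int_{\Sigma\cap\{r\le r_0\}}n_\Sigma(\psi)+\int_{\Sigma\cap\{r=r_0\}}\psi$ collapses to
\[ 2h_\Sigma\, r\partial_v w+(2-h_\Sigma D)\tfrac{1-D}{D}\,r\partial_u w+(2-h_\Sigma D)\,w. \]
Each term is integrable at infinity: the first is $O(r^{-2})$ by the hypothesis that $r^3\partial_v(r\psi)|_\Sigma$ is finite; the last is $O(r^{-1-\epsilon})$ since $w=r\psi$ is bounded; and the middle one gains the extra redshift factor $1-D=O(r^{-1})$, making it $O(r^{-1-\epsilon})$ even when $\partial_u w=O(1)$. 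Hence $\lim_{r_0\to\infty}F(r_0)$ exists and is finite.

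The main obstacle is the analytic control along $\Sigma$ of the three quantities $w=r\psi$, $\partial_v w$, and the incoming derivative $\partial_u w$. The hypothesis directly supplies $\partial_v(r\psi)=O(r^{-3})$, and boundedness of $r\psi$ follows from $I_0[\psi]=0$, but $\partial_u(r\psi)$ is the delicate one. What rescues the argument is purely geometric: after the cancellation the $\partial_u$ contribution always appears multiplied by the product $(2-h_\Sigma D)(1-D)=O(r^{-2-\epsilon})$, so that merely $\partial_u(r\psi)=O(1)$ — which holds for the admissible class of data in \cite{paper2} — already suffices, and no sharp rate for the incoming derivative is needed. I would also be careful with the signature and normalisation conventions (the factor $g_{uv}=-D/2$ and the future-orientation of $n_\Sigma$), since these pin down the precise constants but do not affect the structure of the cancellation.
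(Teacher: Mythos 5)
Your proposal follows essentially the same route as the paper's proof: write the flux density in the induced coordinates on $\Sigma$ (where the square-root factors from the unit normal and the induced volume form cancel), convert the sphere integral at $r=r_0$ into a radial integral by the fundamental theorem of calculus, and observe that the individually divergent pieces cancel. The paper carries this out in the induced $(v,\omega)$ coordinate and arrives at $\int\bigl(2r\partial_v(r\psi)+fD\cdot r\psi\bigr)\,dv\,d\omega$ with $f=O(v^{-1-\epsilon})$, which is your final expression up to the Jacobian $dv=h_\Sigma\,d\rho$.

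One correction to your computation: from $\partial_\rho=-2D^{-1}\partial_u+h_\Sigma T$ the tangential derivative is $\partial_\rho w=h_\Sigma\partial_v w-\frac{2-h_\Sigma D}{D}\partial_u w$, not $h_\Sigma\partial_v w-(2-h_\Sigma D)\partial_u w$ (you appear to have taken the paper's displayed relation $f=\frac{1}{h_\Sigma}(2-h_\Sigma D)$ at face value, but that formula is off by a factor $D^{-1}$ from the paper's own definition of $\partial_\rho$, as your own computation of $g(\partial_\rho,\partial_\rho)=h_\Sigma(2-h_\Sigma D)$ confirms). With the correct coefficient, the $r\partial_u w$ contribution from the flux, namely $\frac{2-h_\Sigma D}{D}r\partial_u w$, cancels \emph{identically} against the one coming from $\rho\,\partial_\rho w$, leaving simply $2h_\Sigma r\partial_v w+(2-h_\Sigma D)w$; there is no residual $(2-h_\Sigma D)\frac{1-D}{D}r\partial_u w$ term, and the assumption $\partial_u(r\psi)=O(1)$ that you describe as the main obstacle is not needed at all. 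This exact cancellation of the incoming derivative is precisely what happens in the paper's version, where the $f\partial_u\psi$ terms from $n_\Sigma(\psi)\,dg_\Sigma$ and from $\partial_v^\Sigma(r^2\psi)$ cancel, leaving only the $\partial_u r=-D/2$ contribution. Note that both your argument and the paper's additionally use boundedness of $r\psi$ along $\Sigma$, which holds for the data class considered but is not literally among the stated hypotheses.
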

\begin{proof}
If $\Sigma=\mathcal{N}$ (equipped with the induced coordinate system $(v,\omega)$) is outgoing null then we let $n_{\mathcal{N}}:=\partial_v$, $d\mu_{\mathcal{N}}:= r^2 dvd\omega$
and we obtain
\begin{equation*}
\begin{split}
\lim_{r_0\rightarrow \infty}\left(\int_{\mathcal{N}\cap\{r\leq r_0\}} n_{\mathcal{N}}(\psi)\,d\mu_{\mathcal{N}}+\int_{\mathcal{N}\cap\{r=r_0\}}\psi\,r^2d\omega\right)
=&\: \lim_{r_0\rightarrow \infty} \int_{\mathcal{N}\cap\{r\leq r_0\}} \partial_v\psi\cdot r^2 dv d\omega\\
&+\int_{\mathcal{N}\cap\{r=r_0\}}\psi r ^2 d\omega\\
=&\: \lim_{r_0\rightarrow \infty}\int_{\mathcal{N}\cap\{r\leq r_0\}}\Big( \partial_v\psi\cdot r^2+\partial_{v}(r^2\psi)\Big) dv d\omega\\
 =&\: \lim_{r_0\rightarrow \infty} \int_{\mathcal{N}\cap\{r\leq r_0\}} 2\partial_v(r\psi)r dv d\omega<\infty,
\end{split}
\end{equation*}
since $rv^2|\partial_v(r\psi)|= O(1)$ by assumption. 

For a general spherically symmetric hyperboloidal hypersurface $\Sigma$, equipped with the induced coordinate system $(v,\omega)$, we have 
\[\partial_{v}^{\Sigma}=\partial_v-f\partial_u,\]
\[n_{\Sigma}= \frac{1}{\sqrt{fD}}\cdot(\partial_v+f\partial_u),\]
\[d\mu_{\Sigma}=\sqrt{g_{\Sigma}}dvd\omega =\sqrt{fD}r^2 dv d\omega.\]
where $f:\Sigma\rightarrow \mathbb{R}$ is defined in Section \ref{geosetting} and satisfies $f(v)=O(v^{-1-\epsilon})$.
Then,
\begin{equation*}
\begin{split}
&\lim_{r_0\rightarrow \infty}\left(\int_{\Sigma\cap\{r\leq r_0\}} n_{\Sigma}(\psi)\,d\mu_{\Sigma}+\int_{\Sigma\cap\{r=r_0\}}\psi\,r^2d\omega\right)\\
&=\lim_{r_0\rightarrow \infty}\left(\int_{\Sigma\cap\{r\leq r_0\}}  \frac{1}{\sqrt{fD}}\cdot(\partial_v\psi+f\partial_u\psi)\sqrt{fD}r^2dvd\omega+\int_{\Sigma\cap\{r=r_0\}}\psi r ^2 d\omega\right)
\\&= \lim_{r_0\rightarrow \infty}\int_{\Sigma_0\cap\{r\leq r_0\}}
\Big((\partial_v\psi+f\partial_u\psi)r^2 +\partial_{v}^{\Sigma}(r^2\psi)\Big)dvd\omega
\\&= \lim_{r_0\rightarrow \infty}\int_{\Sigma_0\cap\{r\leq r_0\}}
\Big(r^2\partial_v\psi+fr^2\partial_u\psi +\partial_{v}(r^2\psi)-f\partial_u (r^2\psi)\Big)dvd\omega\\
& =\lim_{r_0\rightarrow \infty} \int_{\Sigma\cap\{r\leq r_0\}}\Big( 2\partial_v(r\psi)r +fD\cdot (r\psi)\Big)dv d\omega<\infty 
\end{split}
\end{equation*}
since $|v\psi|, v^{1+\epsilon}fD, v^3|\partial_v(r\psi)|= O(1)$ along $\Sigma$.
\end{proof}

\begin{proof}[Proof of Theorem \ref{mytheo1}]
We will show the theorem in the case where the hypersurface $\Sigma$ is of spacelike-null type (see \cite{paper2}). The computation is identical for general hypersurfaces crossing $\mathcal{H}^{+}$ and intersecting $\mathcal{I}^{+}$. For the spacelike-null case, according to the computation in \cite{paper2}, we have
\begin{equation}
\begin{split}
4\pi I_0[\psi^{(1)}]=& -\lim_{r_0\to \infty}\int_{\Sigma\cap \{r=r_0\}}r^3\partial_{r}\phi \, d\omega+M\int_{\Sigma\cap \{r=R\}} r(2-Dh_{\Sigma})\phi\, d\omega\\
&+2M\int_{\Sigma\cap \{r\geq R\}}r \partial_v\phi\,dv'd\omega\\
&-M\int_{\Sigma \cap \{r\leq R\}} \Big(2(1-h_{\Sigma}D)r\partial_{\rho}\phi-(2-Dh_{\Sigma})rh_{\Sigma} T\phi-(r\cdot (Dh_{\Sigma})')\cdot\phi\Big)\,d\rho'd\omega,
\end{split}
\label{nptimeintegralformula}
\end{equation}
where $\phi=r\psi$ and $\partial_r=\frac{2}{D}\partial_v$.

We will show  \eqref{geomiequa}.
Consider first the spacelike piece $\Sigma_{R}=\Sigma\cap \{r\leq R\}$. We equip $\Sigma_{R}$ with the coordinate system $(\rho,\omega)$, where $\rho=r|_{\Sigma_R}$. Recall that the radial tangential vector field $\partial_{\rho}$ is given by 
\[\partial_{\rho}=-2D^{-1}\partial_u+h_{\Sigma}T \]
and hence
\[g_{\Sigma_R}(\partial_{\rho},\partial_{\rho})=2h_{\Sigma}-h_{\Sigma}^2D>0.\]
We therefore have that
\[d\mu_{\Sigma_R}=\sqrt{2h_{\Sigma}-h_{\Sigma}^2D}\cdot r^2 drd\omega.\]
The unit future-directed normal vector field $n_{\Sigma_R}$ is given by
\[n_{\Sigma_R}= \frac{1}{\sqrt{2h_{\Sigma}-h_{\Sigma}^2D}}\cdot\Big((h_{\Sigma}D-1)\partial_{\rho}+(2-h_{\Sigma}D)h_{\Sigma}T\Big), \]
so we obtain
\begin{equation*}
\begin{split}
\int_{\Sigma_R}n_{\Sigma_R}\psi \,d\mu_{\Sigma}=\int_{\Sigma_R}\Big(r^2\cdot(h_{\Sigma}D-1)\cdot \partial_{\rho}\psi+r^2\cdot(2-h_{\Sigma}D)h_{\Sigma}\cdot T\psi\Big)\Big)d \rho d\omega.
\end{split}
\end{equation*}
Consider now $r_0>R$. Then
\begin{align*}
\int_{\Sigma\cap\{R\leq r\leq r_0\}}n_{\Sigma}\psi \,d\mu_{\Sigma}=&\:\int_{\Sigma\cap\{R\leq r\leq r_0\}}\partial_v\psi \cdot r^2 dv d\omega,\\
\int_{\Sigma\cap \{r=r_0\}}\psi r^2 d\omega =&\:\int_{\Sigma\cap \{r=R\} }\psi r^2 d\omega  +\int_{\Sigma\cap\{R\leq r\leq r_0\}} \partial_v(r^2\psi) \, dv d\omega, \\
\int_{\Sigma\cap\mathcal{H}^{+}}\psi r^2 d\omega =&\: \int_{\Sigma\cap \{r=R \}}\psi r^2 d\omega- \int_{\Sigma_R}\partial_{\rho}(r^2\psi) d \rho d\omega,\quad \textnormal{and}\\
\int_{\Sigma\cap\{r=r_0\}}-\frac{2}{M}r\partial_v(r\psi)r^2 d\omega=&\:\int_{\Sigma\cap\{r=r_0\}}-\frac{D}{M}r^3\partial_r(r\psi) d\omega.
\end{align*}
By adding them up we obtain
\begin{equation*}
\begin{split}
G(\Sigma^{\leq r_0})[\psi]=&\int_{\Sigma_R}\Big(-\partial_{\rho}(r^2\psi) +r^2\cdot(h_{\Sigma}D-1)\cdot \partial_{\rho}\psi+r^2\cdot(2-h_{\Sigma}D)h_{\Sigma}\cdot T\psi\Big)d \rho d\omega\\ &+ \int_{\Sigma\cap\{R\leq r\leq r_0\}}\Big(\partial_v\psi \cdot r^2+ \partial_v(r^2\psi)\Big) dv d\omega+\int_{\Sigma\cap\{r=r_0\}}-\frac{2}{M}r\partial_v(r\psi)r^2 d\omega\\
&+2\int_{\Sigma\cap \{r=R\} }\psi r^2 d\omega,
\end{split}
\end{equation*}
and so
\begin{equation*}
\begin{split}
G(\Sigma^{\leq r_0})[\psi]=&\int_{\Sigma_R}\Big( -2 r^2 \partial_{\rho}\psi -2r\psi
+r^2 h_{\Sigma}D\partial_{\rho}\psi
 +r^2\cdot(2-h_{\Sigma}D)h_{\Sigma}\cdot T\psi\Big)drd\omega\\ 
&+ \int_{\Sigma\cap\{R\leq r\leq r_0\}}2r\partial_v(r\psi) dv d\omega+\int_{\Sigma\cap\{r=r_0\}}-\frac{D}{M}r^3\partial_r(r\psi) d\omega\\
&+2\int_{\Sigma\cap \{r=R\} }\psi r^2 d\omega.
\end{split}
\end{equation*}
If we denote
\[K= \int_{\Sigma_{R}}\Big( -2r^2\partial_{\rho}\psi-2r\psi+r^2h_{\Sigma}D\partial_{\rho}\psi\Big)\, drd\omega,\]
then, since $D=0$ on the horizon, we obtain
\begin{equation*}
\begin{split}
K=&K-\int_{\Sigma\cap \{r=R\}}DR^2h_{\Sigma}\psi \,d\omega+ \int_{\Sigma\cap \{r=R\}}DR^2h_{\Sigma}\psi \,d\omega\\
=&K-\int_{\Sigma\cap \{r=R\}}DR^2h_{\Sigma}\psi \,d\omega +\int_{\Sigma_{R}} \partial_{\rho}(r^2Dh_{\Sigma}\psi)
\,drd\omega\\
=&K-\int_{\Sigma\cap \{r=R\}}DR^2h_{\Sigma}\psi \,d\omega +\int_{\Sigma_{R}} \Big(h_{\Sigma}Dr^2\partial_{\rho}\psi +2h_{\Sigma}Dr\psi+r^2(Dh_{\Sigma})'\psi\Big)
\,drd\omega
\\=&-\int_{\Sigma\cap \{r=R\}}DR^2h_{\Sigma}\psi \,d\omega +\int_{\Sigma_{R}}\Big( 2(h_{\Sigma}D-1) r^2 \partial_{\rho}\psi +2(h_{\Sigma}D-1)r\psi +r^2\cdot (Dh_{\Sigma})'\cdot \psi\Big)\,dr d\omega
\\=&-\int_{\Sigma\cap \{r=R\}}DR^2h_{\Sigma}\psi \,d\omega +\int_{\Sigma_{R}}\Big( 2(h_{\Sigma}D-1) r\partial_{\rho}(r\psi)+r^2\cdot (Dh_{\Sigma})'\cdot \psi \Big)\,drd\omega.
\end{split}
\end{equation*}
Hence,
\begin{equation*}
\begin{split}
G(\Sigma^{\leq r_0})[\psi]=&\int_{\Sigma_R}\Big(  2(h_{\Sigma}D-1) r\partial_{\rho}(r\psi)+r^2\cdot (Dh_{\Sigma})'\cdot \psi 
 +r^2\cdot(2-h_{\Sigma}D)h_{\Sigma}\cdot T\psi\Big)drd\omega\\ 
&+ \int_{\Sigma\cap\{R\leq r\leq r_0\}}2r\partial_v(r\psi) dv d\omega+\int_{\Sigma\cap\{r=r_0\}}-\frac{D}{M}r^3\partial_r(r\psi) d\omega\\
&+\int_{\Sigma\cap \{r=R\} }\Big(2\psi R^2-DR^2h_{\Sigma}\psi\Big) d\omega.
\end{split}
\end{equation*}
Taking the limit of the above equation as $r_0\rightarrow +\infty$ yields the desired result. 
\end{proof}

\section{Conservation law for the modified gradient fluxes}
\label{sec:ConservationLawForTheModifiedGradientFluxes}

We show next that the expressions given by the right hand side of \eqref{geomiequa} are indeed conserved without invoking the time-inversion construction. Hence, we obtain a purely geometric interpretation of the time-inverted constants and their conservation law.
\begin{proposition} \textbf{(Conservation law for the TINP constant)}

Consider two arbitrary hypersurfaces $\Sigma_i, i=1,2$ which cross the event horizon and terminate at future null infinity.

Let $\psi$ be a solution to the wave equation \eqref{we} with vanishing Newman--Penrose constant $I_{0}[\psi]=0$ such that in fact
\[ \lim_{r\rightarrow \infty} r^3 |\partial_{v}(r\psi)|_{\Sigma_1}<\infty,\ \ \ \ \lim_{r\rightarrow \infty} r^3 |\partial_{v}(r\psi)|_{\Sigma_2}<\infty.  \] 
 Then, 
\begin{equation}
\lim_{r_0\rightarrow \infty}G(\Sigma_1^{\leq r_0})[\psi]=\lim_{r_0\rightarrow \infty}G(\Sigma_2^{\leq r_0})[\psi],
\label{geomiequacon}
\end{equation}
where $G$ is given by \eqref{trunc}. 
In other words, the expression $\lim_{r_0\rightarrow \infty}G(\Sigma^{\leq r_0})[\psi]$ is \textbf{independent} of the hypersurface $\Sigma$.
\label{prop2deri}
\end{proposition}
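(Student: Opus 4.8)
The plan is to obtain the conservation law directly from the fact that $\nabla\psi$ is divergence-free, $\Div(\nabla\psi)=\Box_g\psi=0$, without ever invoking the time-integral construction. Equivalently, the $3$-form $\star\,d\psi$ is closed, since $d\star d\psi=(\Box_g\psi)\,\mathrm{dVol}=0$, and the whole argument is an application of Stokes' theorem to this form. By the symmetry of \eqref{geomiequacon} it suffices to treat the case $\Sigma_2\subset\mathcal J^+(\Sigma_1)$, the general case following by inserting an auxiliary hypersurface to the future of both (to which the hypotheses propagate by the estimates of \cite{paper2}). For fixed large $r_0$ I would integrate $d(\star d\psi)=0$ over the compact region $\mathcal{R}_{r_0}$ bounded by $\Sigma_1$, $\Sigma_2$, the segment of $\mathcal H^+$ between $\Sigma_1\cap\mathcal H^+$ and $\Sigma_2\cap\mathcal H^+$, and the timelike cylinder $\{r=r_0\}$ between the two surfaces. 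Crucially $\nabla\psi$ is smooth up to and across the bifurcation sphere — it is $T$, not $\nabla\psi$, that degenerates there — so Stokes' theorem applies verbatim even when the $\Sigma_i$ pass through $S_{\text{BF}}$; this is exactly what lets the gradient flux replace the time-integral. Stokes' theorem then reads
\[
\int_{\Sigma_2\cap\{r\leq r_0\}} n_{\Sigma_2}(\psi)-\int_{\Sigma_1\cap\{r\leq r_0\}} n_{\Sigma_1}(\psi)=F_{\mathcal H}(r_0)+F_{\mathrm{cyl}}(r_0),
\]
where $F_{\mathcal H}$ and $F_{\mathrm{cyl}}$ are the fluxes of $\star d\psi$ through the horizon segment and through the cylinder.

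The horizon contribution is the routine piece. On $\mathcal H^+=\{u=\infty\}$ the pullback of $\star d\psi$ reduces (after multiplying by $\sqrt{-g}$ and using $g^{uv}=-2/D$) to $-r_{\text{min}}^2\,\partial_v\psi\,dv\wedge d\omega$, so that $F_{\mathcal H}(r_0)$ telescopes in $v$ to $-r_{\text{min}}^2\int_{\mathbb S^2}(\psi|_{\Sigma_2\cap\mathcal H^+}-\psi|_{\Sigma_1\cap\mathcal H^+})\,d\omega$. Since the induced area element on $\Sigma_i\cap\mathcal H^+$ is $r_{\text{min}}^2\,d\omega$, this is precisely the difference of the horizon terms $\int_{\Sigma_i\cap\mathcal H^+}\psi$ in \eqref{trunc}, and $F_{\mathcal H}$ cancels against them when the above identity is substituted into $G(\Sigma_2^{\leq r_0})[\psi]-G(\Sigma_1^{\leq r_0})[\psi]$. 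Note that the relevant constant is $r_{\text{min}}^2$, so the matching is insensitive to the charge and works throughout the sub-extremal Reissner--Nordström range.

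What remains — and this is the heart of the argument — is to show that the cylinder flux exactly compensates the difference of the outer boundary terms of $G$ in the limit, i.e. $\lim_{r_0\to\infty}\big(B_2(r_0)-B_1(r_0)+F_{\mathrm{cyl}}(r_0)\big)=0$, where $B_i(r_0)=\int_{\Sigma_i\cap\{r=r_0\}}\big(\psi-\tfrac2M r\partial_v(r\psi)\big)$. Writing $\phi=r\psi$ and parametrising the cylinder by the flow of $T=\partial_u+\partial_v$, the difference $B_2-B_1$ becomes a $v$-integral of $T$-derivatives, while $F_{\mathrm{cyl}}$, pulled back to $\{r=r_0\}$, contributes $r_0(\partial_v\phi-\partial_u\phi)-D\phi$ per unit $(v,\omega)$. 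Three cancellations then have to be exhibited: first, the potentially $O(r_0)$ terms organise as $r_0\,T\phi+r_0(\partial_v-\partial_u)\phi=2r_0\,\partial_v\phi$, so the dangerous $\partial_u\phi$ pieces drop out; second, the resulting $-\tfrac2M r_0^3\,\partial_u\partial_v\phi$ term is converted by the spherically symmetric wave equation $\partial_u\partial_v\phi=-\tfrac{DD'}{4r}\phi$ (the angular part drops out upon integration over $\mathbb S^2$) into $\tfrac{r_0^2 DD'}{2M}\phi$, which for Schwarzschild equals exactly $D\phi$ and so cancels the $-D\phi$ coming from $F_{\mathrm{cyl}}$ — this is precisely why the normalising constant in \eqref{trunc} must be $\tfrac2M$; third, the leftover $2r_0\partial_v\phi-\tfrac2M r_0^3\partial_v^2\phi$ must be shown to vanish in the limit.

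The main obstacle is controlling this final limit, since the two leftover terms do not decay termwise in any naive sense and the argument has to use both hypotheses in an essential way. The vanishing Newman--Penrose constant $I_0[\psi]=0$ forces the coefficient of $r^{-1}$ in the expansion of $\phi$ to vanish, giving $\partial_v\phi=O(r^{-3})$ and $\partial_v^2\phi=O(r^{-4})$ along the cylinder, while $\lim_{r\to\infty}r^3\partial_v(r\psi)<\infty$ pins down the leading coefficients so that $2r_0\partial_v\phi=O(r_0^{-2})$ and $\tfrac2M r_0^3\partial_v^2\phi=O(r_0^{-1})$; since the cylinder subtends a retarded-time interval of bounded length as $r_0\to\infty$, integrating these gives $o(1)$. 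This is exactly the quantitative input already used in Lemma \ref{remarkderivation} to prove finiteness of each $G(\Sigma_i^{\leq r_0})[\psi]$. One should also track the error in the second cancellation in the genuinely Reissner--Nordström case, where $\tfrac{r_0^2DD'}{2M}\phi$ differs from $D\phi$ by a term that, by the expansion $D=1-\tfrac{2M}{r}+\tfrac{d_1}{r^2}+O_3(r^{-2-\beta})$, is $O(r_0^{-1})\phi$ and hence also negligible after integration. Assembling the three cancellations with the horizon cancellation, the Stokes identity collapses to $G(\Sigma_2^{\leq r_0})[\psi]-G(\Sigma_1^{\leq r_0})[\psi]\to0$, which is \eqref{geomiequacon}; the statements for surfaces meeting the past horizon or past null infinity follow by the time-reflection symmetry of the background.
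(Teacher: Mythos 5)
Your overall strategy coincides with the paper's: apply the divergence identity for $\nabla\psi$ in a region between $\Sigma_1$ and $\Sigma_2$, telescope the horizon flux against the horizon terms of $G$, and use the spherically symmetric wave equation $\partial_u\partial_v\phi=-\tfrac{1}{4r}DD'\,\phi$ to explain why the normalization $\tfrac{2}{M}$ in \eqref{trunc} makes the far-region boundary terms match. The one substantive difference is the choice of far-region cutoff: you truncate with the timelike cylinder $\{r=r_0\}$, whereas the paper truncates with the outgoing null hypersurface $\{v=v_\infty\}$. There the key identity is
\begin{equation*}
D\phi=\partial_u\Bigl(-\tfrac{2}{M}r^3\partial_v\phi\Bigr)+\Bigl(-\tfrac{3}{M}+O(r^{-1})\Bigr)r^2\partial_v\phi+\phi\cdot O(r^{-1}),
\end{equation*}
so the dangerous term is a total $\partial_u$-derivative that integrates exactly along $\{v=v_\infty\}$ to the boundary terms $-\tfrac{2}{M}r^3\partial_v\phi$ appearing in $G$, and the only second derivative ever encountered is $\partial_u\partial_v\phi$, which the wave equation converts to a zeroth-order term. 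No $\partial_v^2\phi$ appears anywhere.

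This is where your version has a genuine gap. Differentiating $B_i(r_0)$ along the flow of $T$ on the cylinder produces $-\tfrac{2}{M}r_0^3\bigl(\partial_u\partial_v\phi+\partial_v^2\phi\bigr)$; the first piece is handled by the wave equation exactly as you say, but the second piece $-\tfrac{2}{M}r_0^3\partial_v^2\phi$ survives, and your disposal of it rests on the claim that $I_0[\psi]=0$ ``forces the coefficient of $r^{-1}$ in the expansion of $\phi$ to vanish, giving \dots $\partial_v^2\phi=O(r^{-4})$.'' The hypotheses of the proposition give no asymptotic expansion of $\phi$ and no control on any second $v$-derivative: they assert only that $r^3\partial_v(r\psi)$ has a finite limit along each of the two hypersurfaces. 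Even granting the propagation of $r^3|\partial_v\phi|\leq C$ into the intermediate region (which both proofs implicitly borrow from \cite{paper2}), the bound $\partial_v^2\phi=O(r^{-4})$ is a commuted, strictly stronger estimate that must either be added as a hypothesis or proved separately; as written your final limit does not follow. The fix is either to supply that second-derivative estimate (e.g.\ via an $r^p$-weighted argument applied to $\partial_v(r^3\partial_v\phi)$, at the cost of narrowing the class of admissible $\psi$), or simply to switch the cutoff to $\{v=v_\infty\}$, after which the term never arises. The remaining ingredients of your argument --- the horizon telescoping with area element $r_{\text{min}}^2\,d\omega$, the cancellation of the $\partial_u\phi$ contributions, the identity $\tfrac{r_0^2DD'}{2M}=D$ for Schwarzschild with an $O(r_0^{-1})$ correction in the Reissner--Nordstr\"om case, and the reduction to the case $\Sigma_2\subset\mathcal{J}^{+}(\Sigma_1)$ --- are all correct and consistent with the paper.
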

\begin{proof}
 We assume that $\Sigma_2$ lies in the causal future of $\Sigma_1$.

Let $v_{\infty}>0$ and large and consider the region $\mathcal{R}_{v_{\infty}}$ bounded by the hypersurfaces $\mathcal{H}^{+},\Sigma_1,\Sigma_2, \{v=v_{\infty}\}$. 
	\begin{figure}[H]
\begin{center}
\includegraphics[width=6.5cm]{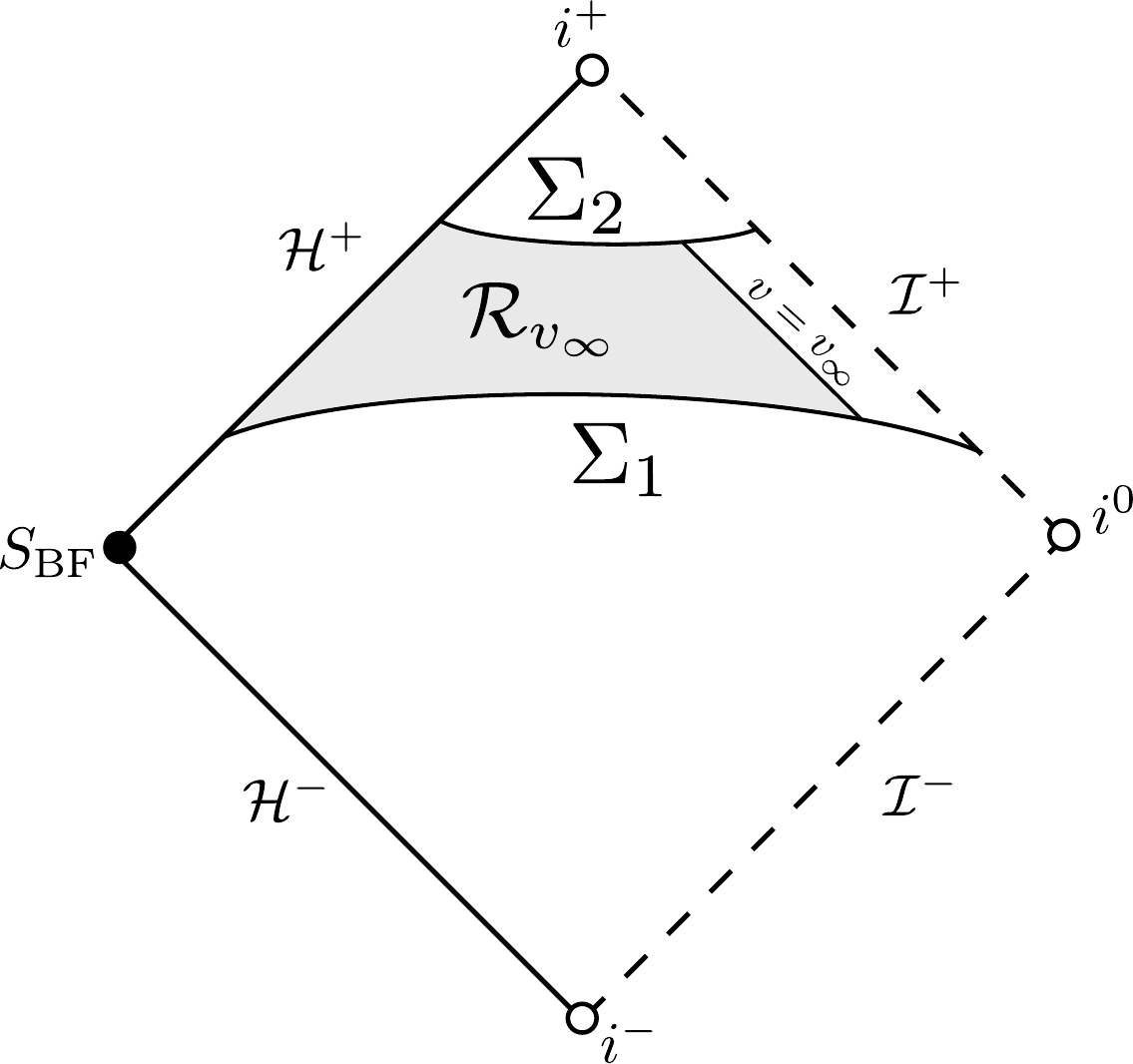}
\caption{\label{fig:14}The region $\mathcal{R}_{v_{\infty}}$.}
\end{center}
\end{figure}
\vspace{-0.6cm}
We apply Stokes' theorem for the gradient vector field 
\begin{equation}
X_{\mu}=\big(\nabla\psi)_{\mu}
\label{grad}
\end{equation}
in the region $\mathcal{R}_{v_{\infty}}$:
\begin{equation}
\begin{split}
\int_{\Sigma_1\cap \mathcal{R}_{v_{\infty}}} n_{\Sigma_1}\psi\, d\mu_{\Sigma_1}=\int_{\Sigma_2\cap \mathcal{R}_{v_{\infty}}} n_{\Sigma_2}\psi\, d\mu_{\Sigma_1}+\int_{\mathcal{H}^{+}\cap \mathcal{R}_{v_{\infty}}}T\psi\,r^2d\omega dv+\int_{\{v=v_{\infty}\}\cap \mathcal{R}_{v_{\infty}}}\partial_u\psi\,r^2d\omega du. 
\end{split}
\label{stokes1}
\end{equation}
Note that since $\mathcal{H}^{+}$ is Killing horizon with normal $T$:
\begin{equation}
\begin{split}
\int_{\mathcal{H}^{+}\cap \mathcal{R}_{v_{\infty}}}T\psi\,r^2d\omega=&\int_{\mathcal{H}^{+}\cap \Sigma_2}\psi\,r^2d\omega-\int_{\mathcal{H}^{+}\cap \Sigma_1}\psi\,r^2d\omega.
\end{split}
\label{horizonkillingintegration}
\end{equation}
Furthermore, we obtain
\begin{equation}
\begin{split}
\int_{\{v=v_{\infty}\}\cap \mathcal{R}_{v_{\infty}}}\partial_u\psi\,r^2d\omega du =&\int_{\{v=v_{\infty} \}\cap \Sigma_2}\psi\,r^2d\omega -\int_{\{v=v_{\infty} \}\cap \Sigma_1}\psi\,r^2d\omega -\int_{\{v=v_{\infty}\}\cap \mathcal{R}_{v_{\infty}}}\psi \cdot  2r \cdot \partial_u r \,du d\omega
\\=&
\int_{\{v=v_{\infty} \}\cap \Sigma_2}\psi\,r^2d\omega -\int_{\{v=v_{\infty} \}\cap \Sigma_1}\psi\,r^2d\omega +\int_{\{v=v_{\infty}\}\cap \mathcal{R}_{v_{\infty}}}\phi \cdot D \,du d\omega
\end{split}
\label{integrationbyparts}
\end{equation}
where $\phi=r\psi$. The wave equation for the spherically symmetric $\psi$ takes the form 
\[\partial_u \partial_v \phi= A\cdot \phi \]
where 
\begin{equation}
A(r)= -\frac{1}{4r}D\cdot D'= -\frac{M}{2}\frac{1}{r^3}+ O\left( r^{-4}\right),
\label{eqa}
\end{equation}
since 
\begin{equation}
D(r)= 1-\frac{2M}{r}+O\left(r^{-2}\right).
\label{eqd}
\end{equation}
Hence, 
\begin{equation*}
\begin{split} \partial_u (r^3 \partial_v \phi)=& 3r^2 \partial_u r \partial_v\phi +r^3\partial_u\partial_v \phi\\
=& -\frac{3D}{2}r^2  \partial_v\phi +r^3\cdot A\cdot  \phi,\\
\end{split}
\end{equation*}
which, in view of \eqref{eqa},\eqref{eqd}, yields
\begin{equation}
D\cdot \phi=\partial_u\left(-\frac{2}{M}r^3 \partial_v \phi\right)+\left(-\frac{3}{M} + O\left(r^{-1}\right)\right)\cdot r^2\partial_v\phi + \phi \cdot  O\left(r^{-1} \right).
\label{eq:phi}
\end{equation}
Therefore, 
\begin{equation}
\begin{split}
\int_{\{v=v_{\infty}\}\cap \mathcal{R}_{v_{\infty}}}\phi \cdot D \,du d\omega
=& \int_{\{v=v_{\infty} \}\cap \Sigma_2}-\frac{2}{M}r^3\partial_v\phi\,d\omega -\int_{\{v=v_{\infty} \}\cap \Sigma_1}-\frac{2}{M}r^3\partial_v\phi\,d\omega \\&+ \int_{\{v=v_{\infty}\}\cap \mathcal{R}_{v_{\infty}}} O\left(r^{-1}\right)\cdot\left( r^2\partial_v\phi + \phi\right) \, du d\omega
\\& +  \int_{\{v=v_{\infty}\}\cap \mathcal{R}_{v_{\infty}}}-\frac{3}{M} \cdot r^2\partial_v\phi \, du d\omega.
\end{split}
\label{integralradiation}
\end{equation}
Therefore, in view of \eqref{stokes1}, \eqref{horizonkillingintegration}, \eqref{integrationbyparts}, \eqref{integralradiation}, we obtain
\begin{equation*}
\begin{split}
&\int_{\Sigma_2\cap \mathcal{R}_{v_{\infty}}}n_{\Sigma_2}\psi\,d\mu_{\Sigma_2} +\int_{\mathcal{H}^{+}\cap\Sigma_2}\psi\,r^2d\omega+\int_{\{v=v_{\infty}\}\cap \Sigma_2 }\psi\,r^2d\omega- \int_{\{v=v_{\infty} \}\cap \Sigma_2}\frac{2}{M}r\partial_v\phi\,r^2d\omega\\=& 
\int_{\Sigma_1\cap \mathcal{R}_{v_{\infty}}}n_{\Sigma_1}\psi\,d\mu_{\Sigma_1} +\int_{\mathcal{H}^{+}\cap\Sigma_1}\psi\,r^2d\omega +\int_{\{v=v_{\infty}\}\cap \Sigma_1 }\psi\,r^2d\omega- \int_{\{v=v_{\infty} \}\cap \Sigma_1}\frac{2}{M}r\partial_v\phi\,r^2d\omega\\&+\mathcal{E}^{\Sigma_1,\Sigma_2}_{{v_{\infty}}}[\psi],
\end{split}
\end{equation*}
where
\begin{equation}
\begin{split}
\mathcal{E}^{\Sigma_1,\Sigma_2}_{{v_{\infty}}}[\psi]=\int_{\{v=v_{\infty}\}\cap \mathcal{R}_{v_{\infty}}} O\left(r^{-1}\right)\cdot\left( r^2\partial_v\phi + \phi\right) \, du d\omega
 +  \int_{\{v=v_{\infty}\}\cap \mathcal{R}_{v_{\infty}}}-\frac{3}{M} \cdot r^2\partial_v\phi \, du d\omega.
\end{split}
\label{preliconse}
\end{equation}
Then, recalling \eqref{trunc}, \eqref{preliconse} yields
\begin{equation}
G(\Sigma_{2}^{\leq r_{\infty}^2})[\psi]=G(\Sigma_{1}^{\leq r_{\infty}^{1}})[\psi]+\mathcal{E}^{\Sigma_1,\Sigma_2}_{{v_{\infty}}}[\psi]
\label{conse1}
\end{equation}
where $r_{\infty}^i:=r\big(\Sigma_{i}\cap \{v=v_{\infty}\}\big)$. We have already established in Lemma \ref{remarkderivation} that the limit 
\[\lim_{r_0\rightarrow \infty}G(\Sigma_{i}^{\leq r_0})[\psi]<\infty.\]
Furthermore, since we have $r^3|\partial_v(r\psi)|\leq C$ we have 
\[\lim_{v_{\infty}\rightarrow \infty}\mathcal{E}^{\Sigma_1,\Sigma_2}_{{v_{\infty}}}[\psi]=0. \]
Hence, we can take the limit of \eqref{conse1} as $v_{\infty}\rightarrow \infty$ (and hence as $r_{\infty}^1, r_{\infty}^2\rightarrow \infty$) to obtain
\[\lim_{r_0\rightarrow \infty}G(\Sigma_{2}^{\leq r_0})[\psi]=\lim_{r_0\rightarrow \infty}G(\Sigma_{2}^{\leq r_0})[\psi]\]
which is the desired result. 
\end{proof}
In \cite{paper2} we defined the TINP constant for all (compactly supported) smooth initial data on a hypersurface crossing the event horizon to the future of the bifurcate sphere; see also Section \ref{sec:ReviewOfTheTimeIntegralConstruction}. As illustrated in Section \ref{sec:ReviewOfTheTimeIntegralConstruction}, this definition breaks down for Cauchy hypersurfaces emanating from the bifurcation sphere since the time-integral construction is singular for smooth initial data with non-trivial support on the bifurcate sphere. 

The following proposition, an immediate corollary of the divergence identity, establishes a generalized conservation law and hence \emph{allows us to extend the definition of the TINP constant with respect to initial data on hypersurfaces which pass through the bifurcate sphere}.\footnote{Clearly, the gradient flux is always defined for such hypersurfaces through the bifurcation sphere.}
\begin{proposition}
Let $\Sigma_{\text{BF}}$ be a Cauchy hypersuface which emanates from the bifurcate sphere $S_{\text{BF}}$ and terminates at null infinity. Let $\Sigma_{0}$ be a a Cauchy hypersuface which crosses the event horizon and terminates at null infinity. We assume 
\[ \Sigma_{\text{BF}}\cap \{r\geq R \}= \Sigma_{0}\cap \{r\geq R \}\] 
for some large $R>0$. Then, 
\begin{equation}
\int_{\Sigma_{0}\cap \{r\leq R\}}n_{\Sigma_{0}}\psi\,d\mu_{\Sigma_0} +\int_{\Sigma_{0}\cap \mathcal{H}^{+}}\psi\,r^2d\omega=\int_{\Sigma_{\text{BF}}\cap \{r\leq R\}}n_{\Sigma_{\text{BF}}}\psi\,d\mu_{\Sigma_{\text{BF}}}  +\int_{\Sigma_{\text{BF}}\cap S_{\text{BF}}}\psi\,r^2d\omega.
\label{birequaprop}
\end{equation}
\label{bifuprop}
\end{proposition}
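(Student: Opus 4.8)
The plan is to derive \eqref{birequaprop} directly from the divergence theorem applied to the gradient vector field $X=\nabla\psi$ of \eqref{grad}, exactly as in the proof of Proposition \ref{prop2deri}, but with the substantial simplification that restricting attention to $\{r\leq R\}$ eliminates every contribution from null infinity as well as the auxiliary null boundary $\{v=v_{\infty}\}$ and its error term $\mathcal{E}$. Since $\psi$ solves \eqref{we}, the field $X$ is divergence-free, $\nabla^{\mu}X_{\mu}=\Box_g\psi=0$.

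First I would fix the region of integration. As $\Sigma_{\text{BF}}$ and $\Sigma_0$ coincide for $r\geq R$, they agree on the sphere $\{r=R\}$ and separate only inside $\{r\leq R\}$; there $\Sigma_0$, which meets $\mathcal{H}^{+}$ strictly to the future of $S_{\text{BF}}$, lies to the future of $\Sigma_{\text{BF}}$, which meets $\mathcal{H}^{+}$ at $S_{\text{BF}}$. Let $\mathcal{R}$ be the \emph{compact} region they enclose: its past boundary is $\Sigma_{\text{BF}}\cap\{r\leq R\}$, its future boundary is $\Sigma_0\cap\{r\leq R\}$, and its remaining boundary is the segment of $\mathcal{H}^{+}$ running from $S_{\text{BF}}$ up to $\Sigma_0\cap\mathcal{H}^{+}$. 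Because the two hypersurfaces coincide at $\{r=R\}$, the region closes up there and produces no lateral boundary term.

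Next I would apply Stokes' theorem to $X$ over $\mathcal{R}$ in the form \eqref{stokes1}. Since $\nabla^{\mu}X_{\mu}=0$ and there is no $\{v=v_{\infty}\}$ boundary, this reads
\[\int_{\Sigma_{\text{BF}}\cap\{r\leq R\}} n_{\Sigma_{\text{BF}}}\psi = \int_{\Sigma_0\cap\{r\leq R\}} n_{\Sigma_0}\psi + \int_{\mathcal{H}^{+}\cap\mathcal{R}} T\psi.\]
The two spacelike fluxes are read off directly, so it only remains to evaluate the horizon flux. Exactly as in \eqref{horizonkillingintegration}, $\mathcal{H}^{+}$ is a Killing horizon whose null generator is $T$, so $T=\partial_v$ there and the integrand is a total $v$-derivative; the flux telescopes to $\int_{\mathcal{H}^{+}\cap\mathcal{R}}T\psi=\int_{\Sigma_0\cap\mathcal{H}^{+}}\psi-\int_{S_{\text{BF}}}\psi$, the lower limit being the value of $\psi$ on the bifurcation sphere. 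Substituting this and rearranging gives precisely \eqref{birequaprop}.

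The single point that needs care — and the only qualitative difference from Proposition \ref{prop2deri} — is the lower endpoint of the horizon segment, namely the bifurcation sphere, where $T$ vanishes and the Eddington--Finkelstein coordinate degenerates. I would treat this by passing to regular Kruskal-type coordinates near $S_{\text{BF}}$, in which $S_{\text{BF}}$ is a smooth $2$-sphere and the boundary value $\psi|_{S_{\text{BF}}}$ of the smooth solution $\psi$ is well defined; the telescoping above then picks up exactly $\int_{S_{\text{BF}}}\psi$ at its lower limit. Note that, because $\mathcal{R}$ is compact, no decay hypothesis at null infinity (such as $I_0[\psi]=0$ or boundedness of $r^3\partial_v(r\psi)$) enters; smoothness of $\psi$ up to and including the bifurcation sphere is the only regularity input required.
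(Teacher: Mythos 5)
Your proof is correct and follows essentially the same route as the paper: the divergence identity for $\nabla\psi$ on the compact region between $\Sigma_{\text{BF}}\cap\{r\leq R\}$, $\Sigma_0\cap\{r\leq R\}$ and the horizon, with the horizon flux telescoping to the difference of boundary sphere integrals. The only (cosmetic) difference is that the paper avoids the improper integral in $v$ at the degenerate lower endpoint by working from the outset with a null generator $V$ that is regular and non-vanishing at $S_{\text{BF}}$ and its affine-type parameter $x$ with $x|_{S_{\text{BF}}}=0$, whereas you integrate $T\psi$ in the $v$-parametrization and then justify the limit at $v\to-\infty$ by passing to Kruskal-type coordinates — which is precisely the right fix for the one delicate point you correctly isolated.
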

\begin{proof}
Let $\mathcal{B}$ denote the region bounded by $\Sigma_{\text{BF}}\cap \{r\leq R\}, \Sigma_{0}\cap \{r\leq R\}$ and $\mathcal{H}^{+}$. 
	\begin{figure}[H]
\begin{center}
\includegraphics[width=6.5cm]{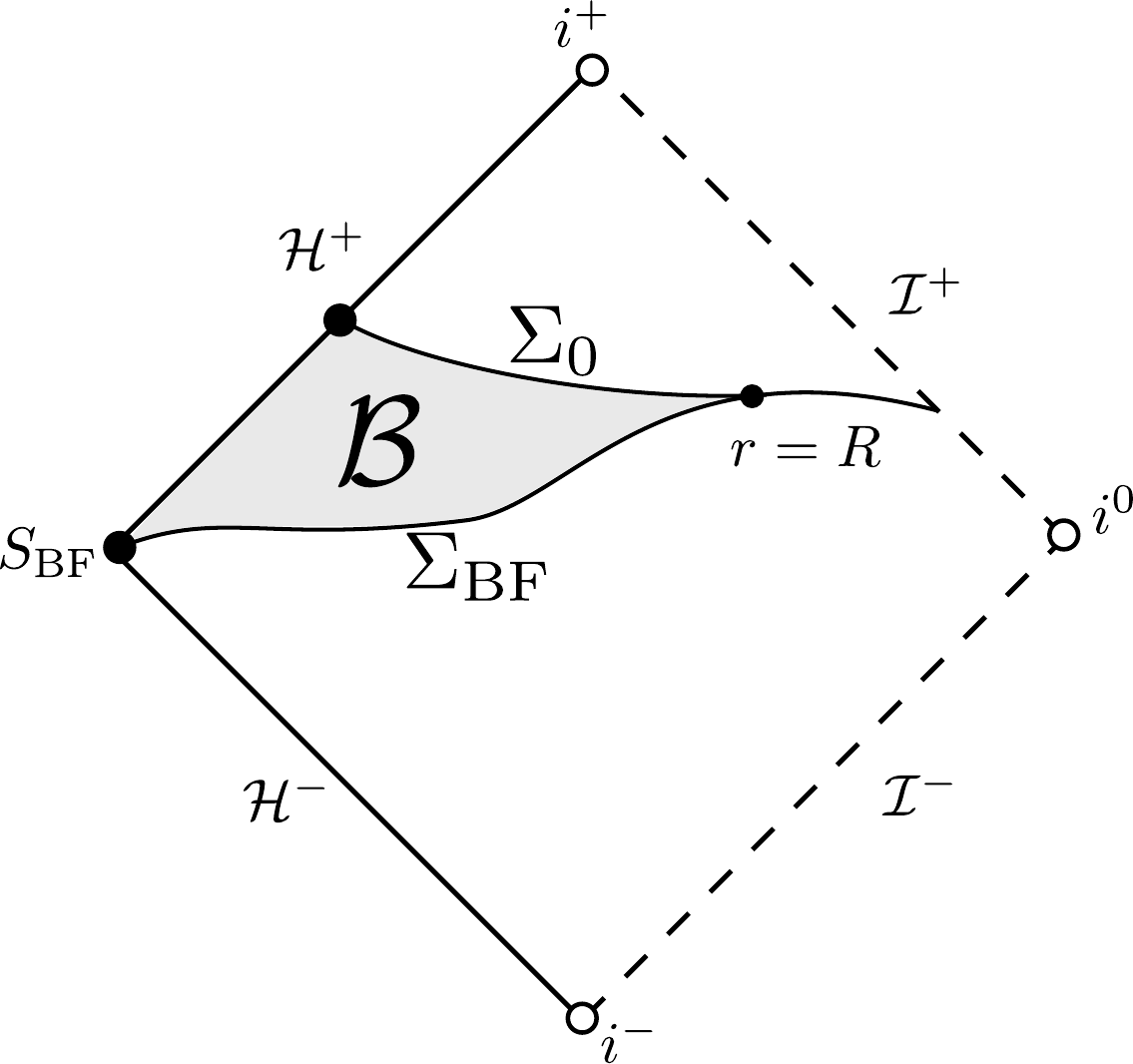}
\caption{\label{fig:14b}The region $\mathcal{B}$.}
\end{center}
\end{figure}
\vspace{-0.6cm}
Let $V$ be a regular (spherically symmetric) null  vector field normal to the future event horizon such that 
\[ V\Big|_{\text{BF}}\neq 0.\]
Let also $x$ denote the unique smooth function on $\mathcal{H}^{+}$ such that 
\[V(x)=1, \ \ \ x\big|_{\text{BF}}=0.\]
The volume form $d\mu_{\mathcal{H}^{+}}$ on $\mathcal{H}^{+}$ expressed in the coordinate system $(x,\omega)$ is given by
\[ d\mu_{\mathcal{H}^{+}}=(r_{\mathcal{H}^{+}})^2\, dx d\omega, \]
where $r_{\mathcal{H}^{+}}$ is the area-radius of the event horizon. Note that since $\mathcal{H}^{+}$ is a Killing horizon, the area-radius $r_{\mathcal{H}^{+}}$ is constant. 
 
We apply the divergence identity for the gradient vector field \eqref{grad} in region $\mathcal{B}$ to obtain:
\begin{equation}
\begin{split}
\label{eq:stokeswavebf}
\int_{\mathcal{B}} \square_g\psi \, d\mu =&\:
\int_{\Sigma_{\text{BF}}\cap \{r\leq R\}}n_{\Sigma_{\text{BF}}}(\psi)\, d\mu_{\Sigma_{\text{BF}}}
\\&-\int_{\Sigma_{0}\cap \{r\leq R\}}n_{\Sigma_{0}}(\psi)\, d\mu_{\Sigma_{0}}-\int_{\mathcal{H}^{+}\cap \mathcal{B}}V\psi \, d\mu_{\mathcal{H}^{+}},
\end{split}
\end{equation}
where $d\mu$ is the standard volume form corresponding to the metric $g$.

Note that 
\begin{equation}
\begin{split}
\int_{\mathcal{H}^{+}\cap \mathcal{B}}V\psi \, d\mu_{\mathcal{H}^{+}}=&
\int_{\s^2}\left(\int_{x}V\psi \, dx\right) r_{\mathcal{H}^{+}}^2 \,d\omega\\
=& \int_{\s^2}\left( \psi\Big|_{\mathcal{H}^+\cap \Sigma_{0}}-\psi\Big|_{\mathcal{H}^+\cap \Sigma_{ \text{BF}}}\right) r_{\mathcal{H}^{+}}^2 \,d\omega\\
=& \int_{\mathcal{H}^+\cap \Sigma_{0}}\psi\, r^2d\omega-\int_{\mathcal{H}^+\cap \Sigma_{\text{BF}}}\psi\, r^2d\omega.
\end{split}
\label{bifuconclu}
\end{equation}
The desired result follows from \eqref{we}, \eqref{eq:stokeswavebf} and \eqref{bifuconclu}. 
\end{proof}

\section{The TINP constant on $\{t=0\}$}
\label{sec:TheTINPConstantOnT0}

We now have all the tools to prove Theorem \ref{mytheo3}.
\begin{proof}[Proof of Theorem \ref{mytheo3}]
We use Theorem \ref{mytheo1} to express the TINP constant $I_0^{(1)}[\psi]$ constant in terms of the modified gradient flux, given by \eqref{trunc} and \eqref{geomiequa}. We next use the conservation law derived in Theorem \ref{mytheo2} to obtain the value of $I_0^{(1)}[\psi]$ in terms of initial data on the hypersurface 
\[\Sigma=\Big\{\{t=0\}\cap \{r\leq R \}\Big\}\cup \Big\{ \{u=u_0\}\cap \{r\geq R\}\Big\}\] 
as follows:
\[\frac{4\pi}{M} I_0^{(1)}[\psi]=\int_{S_{\text{BF}}}\!\!\psi\,r^2d\omega+\int_{\Sigma} n_{\Sigma}(\psi)\,d\mu_{\Sigma}+\lim_{r_0\rightarrow \infty}\int_{\Sigma\cap\{r=r_0\}}\Big(\psi-\frac{2}{M}r\partial_v(r\psi)\Big)\,r^2d\omega.\]
Hence, for initial data on $\{t=0\}$ supported in $\{r\leq R\}$ the third term on the right vanishes. The proof of Theorem \ref{mytheo3} follows from the identity \eqref{nt0}.

\end{proof}


\small

Department of Mathematics, University of California, Los Angeles, CA 90095, United States, yannis@math.ucla.edu

\bigskip

Princeton University, Department of Mathematics, Fine Hall, Washington Road, Princeton, NJ 08544, United States, aretakis@math.princeton.edu

\bigskip

Department of Mathematics, University of Toronto Scarborough 1265 Military Trail, Toronto, ON, M1C 1A4, Canada, aretakis@math.toronto.edu

\bigskip

Department of Mathematics, University of Toronto, 40 St George Street, Toronto, ON, Canada, aretakis@math.toronto.edu

\bigskip

Department of Pure Mathematics and Mathematical Sciences, University of Cambridge, Wilberforce Road, Cambridge CB3 0WB, United Kingdom, dg405@cam.ac.uk

\end{document}